\newtheorem{theorem}{Theorem}
\newtheorem{lemma}[theorem]{Lemma}
\newtheorem{corollary}[theorem]{Corollary}
\theoremstyle{definition}
\newtheorem{definition}[theorem]{Definition}
\newtheorem{example}[theorem]{Example}
\theoremstyle{remark}
\newtheorem*{remark}{Remark}
\begin{document}
\title{Three-input majority function as the unique optimal function for the bias amplification using nonlocal boxes}
\date{\today}
\author{Ryuhei Mori}
\email{mori@c.titech.ac.jp}
\affiliation{School of Computing, Tokyo Institute of Technology, Tokyo 152-8552, Japan}

\begin{abstract}
Brassard et al.\@ [Phys.\@ Rev.\@ Lett.\@ \textbf{96}, 250401 (2006)]  showed that shared nonlocal boxes with the CHSH probability greater than $\frac{3+\sqrt{6}}6$ yields trivial communication complexity.
There still exists the gap with the maximum CHSH probability $\frac{2+\sqrt{2}}4$ achievable by quantum mechanics.
It is an interesting open question to determine the exact threshold for the trivial communication complexity.
Brassard et al.'s idea is based on the recursive bias amplification by the 3-input majority function.
It was not obvious if other choice of function exhibits stronger bias amplification.
We show that the 3-input majority function is the unique optimal, so that one cannot improve the threshold $\frac{3+\sqrt{6}}6$
 by Brassard et al.'s bias amplification.
In this work, protocols for computing the function used for the bias amplification are restricted to be non-adaptive protocols or particular adaptive protocol inspired by Paw\l owski et al.'s protocol for information causality [Nature \textbf{461}, 1101 (2009)].
We first show a new adaptive protocol inspired by Paw\l owski et al.'s protocol,
and then show that the new adaptive protocol is better than any non-adaptive protocol.
Finally, we show that the 3-input majority function is the unique optimal for the bias amplification if we apply the new adaptive protocol to each step of the bias amplification.
\end{abstract}

\pacs{03.65.Ud, 03.65.Ta, 03.67.Mn}

\maketitle

\section{Introduction}
Bell showed that quantum mechanics allows correlations that cannot be generated by classical physics~\cite{bell1964einstein}.
Clauser, Horne, Shimony, and Holt (CHSH) found simpler constraints on correlations which could be violated on quantum mechanics, but is always satisfied
on classical physics~\cite{PhysRevLett.23.880}, which in fact, characterize the set of correlations generated by classical physics on the binary setting~\cite{PhysRevLett.48.291}.
Apart from the concrete mathematical description of quantum mechanics, we can only consider abstract statistical behavior realized by quantum mechanics.
A nonlocal box is an abstract device which represents statistical behavior of separable measurements on a possibly entangled state on quantum mechanics and superquantum theory as well.
A nonlocal box is assumed to be shared by two parties, Alice and Bob.
A nonlocal box has input ports and output ports on the both sides.
A nonlocal box is specified by the conditional probability distribution $p(a,b\mid x,y)$ representing the probability of 
outputting $a$ to Alice and $b$ to Bob when Alice and Bob input $x$ and $y$ into the nonlocal box, respectively.
Here, all of $x$, $y$, $a$ and $b$ are assumed to be either of 0 or 1.
They cannot communicate by using the nonlocal box since it satisfies the no-signaling condition
\begin{align*}
\sum_{b\in\{0,1\}} p(a,b\mid x,0) &= \sum_{b\in\{0,1\}} p(a,b\mid x,1)\\
\sum_{a\in\{0,1\}} p(a,b\mid 0,y) &= \sum_{a\in\{0,1\}} p(a,b\mid 1,y).
\end{align*}
The CHSH probability $P_{\mathrm{CHSH}}$ is a measure of the nonlocality of the nonlocal box defined by
\begin{equation*}
P_{\mathrm{CHSH}} := \frac14\sum_{x,y} \sum_{\substack{a,b\\ a\oplus b = x\wedge y} }p(a,b\mid x,y).
\end{equation*}
While the maximum CHSH probability given by classical physics is $P_{\mathrm{CHSH}}=3/4$,
that for quantum mechanics is $P_{\mathrm{CHSH}}=\frac{2+\sqrt{2}}4$~\cite{cirelson1980}.
On the other hand, Popescu and Rohrich showed that there exists the nonlocal box, called the PR box, with $P_{\mathrm{CHSH}}=1$~\cite{PR1994}.
Hence, it is a natural question why quantum mechanics cannot achieve the CHSH probability greater than $\frac{2+\sqrt{2}}4$.
Van Dam showed that if Alice and Bob share unlimited number of PR boxes, they can compute arbitrary function $f(x,y)$ only by sending 1 bit to each other where $x$ and $y$ are $n$ bits owned by Alice and Bob, respectively~\cite{vandam2005}.
It gives the explanation why Nature does not allow $P_{\mathrm{CHSH}}=1$ since we strongly believe that the trivial communication complexity must not be allowed by Nature.
Furthermore, Brassard et al.\@ showed that the nonlocal box with $P_{\mathrm{CHSH}} > \frac{3+\sqrt{6}}6$ yields the trivial communication complexity on the probabilistic setting~\cite{PhysRevLett.96.250401}.
It has not been known whether or not the communication complexity is trivial when the CHSH probability is between $\frac{2+\sqrt{2}}4$ and $\frac{3+\sqrt{6}}6$.
Later, Paw\l owski et al.\@ completely characterized the quantum CHSH probability $\frac{2+\sqrt{2}}4$ by using new principle called information causality~\cite{pawlowski2009information}.
However, it is still interesting to determine the exact threshold of $P_{\mathrm{CHSH}}$ for the trivial communication complexity.

In this paper, we show that the trivial communication complexity below $\frac{3+\sqrt{6}}6$ cannot be proved by Brassard et al.'s technique.
Their technique is based on the recursive bias amplification from exponentially small bias to constant bias by using the 3-input majority function $\mathrm{Maj}_3$.
It was not obvious that $\mathrm{Maj}_3$ is the best choice for the bias amplification.
It seems to be curious that $\mathrm{Maj}_3$ is the optimal function for the bias amplification if it is true.
In this paper, we show that $\mathrm{Maj}_3$ is the unique optimal function for the bias amplification.

\begin{theorem}\label{thm:main}
The 3-input majority function is the unique optimal for Brassard et al.'s technique of the bias amplification using the nonlocal boxes.
Hence, one cannot obtain the threshold for the trivial communication complexity smaller than $\frac{3+\sqrt{6}}{6}$ by Brassard et al.'s technique.
\end{theorem}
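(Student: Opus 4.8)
The plan is to recast Brassard et al.'s amplification as a recursion on a single bias parameter and then to compare, over all admissible choices of the amplifying function and of the combining protocol, the resulting thresholds on $P_{\mathrm{CHSH}}$. Call a shared subroutine a $\delta$-gadget if, for a target bit $z$, it leaves Alice and Bob holding bits $\alpha$ and $\beta$ with $\alpha\oplus\beta = z$ with probability $\tfrac{1+\delta}{2}$. One amplification step fixes a Boolean function $f\colon\{0,1\}^m\to\{0,1\}$, runs $m$ independent copies of the current $\delta$-gadget to obtain noisy bits $z_1,\dots,z_m$ (each equal to $z$ with bias $\delta$, jointly independent given $z$), and then evaluates $f(z_1,\dots,z_m)$ in an XOR-shared fashion using nonlocal boxes with $P_{\mathrm{CHSH}}=p$; write $\mu:=2p-1$. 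First I would compute the bias of the resulting gadget. For a non-adaptive combining protocol the box errors enter the output only through one overall XOR, so the new object is an $A_f(\delta)\,\mu^{N_f}$-gadget, where $A_f(\delta):=\sum_{|S|\text{ odd}}\widehat f(S)\,\delta^{|S|}$ is the odd part of the $\pm1$-Fourier expansion of $f$ and $N_f$ is the least number of boxes needed, namely the minimum number of product terms $g_j(x)h_j(y)$ in an $\mathbb F_2$-representation $f(x\oplus y)=\bigoplus_j g_j(x)h_j(y)\oplus a(x)\oplus b(y)$; for $\mathrm{Maj}_3$ one gets $A_f(\delta)=\tfrac{3\delta-\delta^3}{2}$ and $N_f=2$.

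The recursion $\delta_{k+1}=A_f(\delta_k)\,\mu^{N_f}$ carries an exponentially small bias (such as the one coming out of van Dam's inner-product subroutine) up to a constant bias exactly when it is expanding at the origin, i.e.\ when $L_f\,\mu^{N_f}>1$ with $L_f:=A_f'(0)=\sum_{i=1}^m\widehat f(\{i\})$; hence the best $P_{\mathrm{CHSH}}$ threshold reachable with the non-adaptive version of $f$ is $p_f=\tfrac12\bigl(1+L_f^{-1/N_f}\bigr)$. Next I would analyse the adaptive protocol inspired by Paw\l owski et al.: one shows that it obeys a modified recursion whose right-hand side dominates $A_f(\delta)\,\mu^{N_f}$ for every $f$, with linearisation $\rho_f(\mu)\,\delta$ at the origin for some improved rate $\rho_f(\mu)$, while for $\mathrm{Maj}_3$ it collapses to Brassard et al.'s two-box scheme and the rate $\tfrac32\mu^2$. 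Because amplification is governed only by the behaviour near $\delta=0$, interleaving different functions at different levels cannot beat the best single-function rate, so the optimal threshold over the whole class equals $\inf_f p_f^{\mathrm{ad}}$ and it remains to identify the minimiser.

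For the optimisation I would first discard every affine $f$ (for which $N_f=0$ and the rate never exceeds $1$) and every $f$ with $L_f\le 0$, since none of these amplify. For the remaining $f$, writing $L_f=\mathbb E\bigl[f(Z)\sum_i Z_i\bigr]$ with $Z$ uniform on $\{-1,1\}^m$ gives $L_f\le\mathbb E\bigl|\sum_{i=1}^m Z_i\bigr|$, with equality iff $f=\mathrm{Maj}_m$ ($m$ odd; the even case is treated similarly), and the analogous extremality holds for the adaptive rate, so only the majorities can be optimal and it suffices to compare $L_{\mathrm{Maj}_m}^{1/N_{\mathrm{Maj}_m}}$ (and its adaptive analogue) over odd $m$. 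The last point is to show $N_f\ge 2$ for every non-affine $f$, with $N_{\mathrm{Maj}_m}=2$ only when $m=3$, and quantitatively that $L_{\mathrm{Maj}_m}^{1/N_{\mathrm{Maj}_m}}<(3/2)^{1/2}$ for all odd $m\ge5$; here one uses $\mathbb E\bigl|\sum_i Z_i\bigr|=\Theta(\sqrt m)$ together with a box-count lower bound such as $N_f\ge\operatorname{rank}_{\mathbb F_2}\bigl[f(x\oplus y)\bigr]-2$, which for the majorities grows much faster than $\log m$. Since $L_{\mathrm{Maj}_3}=3/2$ and $N_{\mathrm{Maj}_3}=2$ give $2p-1=(2/3)^{1/2}$, i.e.\ $p=\frac{3+\sqrt6}{6}$, while every other function yields a strictly larger threshold, $\mathrm{Maj}_3$ is the unique optimum, which is precisely the theorem.

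The hard part will be this last optimisation, and in particular the lower bounds on the box count $N_f$ (and its adaptive counterpart) for functions other than $\mathrm{Maj}_3$. Large $m$ is comfortable, because $\mathbb E\bigl|\sum_i Z_i\bigr|$ grows only polynomially while the number of boxes grows much faster; but the small majorities $\mathrm{Maj}_5$, $\mathrm{Maj}_7$, $\mathrm{Maj}_9$ and the functions of small $\mathbb F_2$-degree sit near the boundary and have to be excluded by an essentially exact evaluation of $N_f$. Equally delicate is specifying precisely which adaptive protocols are admitted, so that both the claimed dominance over every non-adaptive protocol and the extremality of $\mathrm{Maj}_3$ under it genuinely hold.
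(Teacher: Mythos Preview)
Your overall framework is right: the amplification threshold is governed by the linearised rate $L_f\,\mu^{N_f}$ at $\delta=0$, with $L_f=\sum_i\widehat f(\{i\})$ and $N_f$ the box count (adaptively, $D_\rightarrow^\oplus(f^\oplus)$), and you correctly recover $\frac{3+\sqrt6}{6}$ for $\mathrm{Maj}_3$.

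The genuine gap is the reduction to majorities. From $L_f\le\mathbb E\bigl|\sum_i Z_i\bigr|=L_{\mathrm{Maj}_m}$ you conclude ``only the majorities can be optimal,'' but the quantity to be maximised is $L_f^{1/N_f}$, and the exponent $N_f$ (or $D_\rightarrow^\oplus$) is not fixed once $m$ is: a non-majority $f$ on $m$ inputs may have a much smaller box count and hence a larger rate. Concretely, on $5$ inputs the function that takes $\mathrm{Maj}_3$ of three coordinates has $L=3/2$ and $D_\rightarrow^\oplus=2$, giving rate $\tfrac32\mu^2$, which strictly exceeds $\mathrm{Maj}_5$'s rate $\tfrac{15}{8}\mu^4$ at $\mu=\sqrt{2/3}$; so the asserted extremality of $\mathrm{Maj}_m$ for the adaptive rate is false as stated. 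The bound $L_f\le L_{\mathrm{Maj}_m}$ controls only the numerator, not the exponent, and there is no lower bound on $N_f$ in terms of $m$ alone. Your proposed route through $\mathrm{rank}_{\mathbb F_2}$ lower bounds on $N_{\mathrm{Maj}_m}$ would handle the majorities themselves, but leaves all other $f$ uncovered.

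The paper closes this gap by a different idea: instead of reducing to majorities it couples $L_g$ directly to the (adaptive) box count. Since $|\{i:\widehat g(\{i\})\ne0\}|\le\dim(\widehat g)$, Cauchy--Schwarz together with Parseval gives $L_g\le\sqrt{\dim(\widehat g)}$; combined with $D_\rightarrow^\oplus(g^\oplus)\ge\dim(\widehat g)-1$ this yields $\delta_{\mathrm B}(g)\ge d^{-1/(2(d-1))}$ with $d=\dim(\widehat g)$, and one checks $d^{-1/(2(d-1))}>\sqrt{2/3}$ for all $d\ge5$. The residual cases $d\le4$ force $g$ to depend on at most four variables and are finished by a short direct argument. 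The missing ingredient in your plan is precisely this Fourier-dimension inequality linking $L_g$ to the exponent; once you have it, the comparison with specific majorities for $m\ge5$ becomes unnecessary.
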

In Brassard et al.'s protocol, the 3-input majority function $\mathrm{Maj}_3$ is computed by a non-adaptive protocol, i.e.,
inputs for nonlocal boxes are independent of outputs of other nonlocal boxes.
In this work, we introduce a new adaptive protocol inspired by~\cite{pawlowski2009information}, and show that the new adaptive protocol is no worse than arbitrary non-adaptive protocol.
Then, we show Theorem~\ref{thm:main} for generalizations of Brasssard et al.'s protocol in which arbitrary boolean function is used for the bias amplification in place of $\mathrm{Maj}_3$, and is
computed by the new adaptive protocol.
In this work, protocols for the computation of the function corresponding to $\mathrm{Maj}_3$
are restricted to be non-adaptive protocols or the new adaptive protocol inspired by~\cite{pawlowski2009information}.
For the proof of Theorem~\ref{thm:main}, we use the Fourier analysis of boolean functions developed in theoretical computer science~\cite{odonnell2014analysis}.

\section{Preliminaries}\label{sec:pre}
\subsection{XOR protocol and nonlocal boxes}
We introduce some notions and notations.

\begin{definition}
For a boolean function $f\colon\{0,1\}^n\times\{0,1\}^n\to\{0,1\}$,
XOR protocol with bias $\epsilon$ is a process of computations by Alice and Bob
in which Alice and Bob compute $a$ and $b$, respectively, by using nonlocal boxes and shared random bits but without any communication
such that $a\oplus b=f(x,y)$ with probability $(1+\epsilon)/2$.
\end{definition}

There is a simple XOR protocol with bias $2^{-n}$ for arbitrary function~\cite{PhysRevLett.96.250401}.
\begin{lemma}
\label{lem:1bit2n}
There is an XOR protocol with bias $2^{-n}$ for arbitrary function $f(x,y)$ without using nonlocal boxes.
\end{lemma}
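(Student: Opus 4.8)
The plan is to give an explicit protocol in which one party simply \emph{guesses} the input of the other. Let Alice and Bob share a uniformly random pair $(u,t)\in\{0,1\}^n\times\{0,1\}$, where $u$ is meant as a guess for Alice's input $x$. Bob, knowing $y$ and $u$, outputs $b:=f(u,y)$; Alice, knowing $x$ and $u$, outputs $a:=0$ if $x=u$ and $a:=t$ otherwise. No nonlocal box and no communication is used, so this is a legal XOR protocol.

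The bias is then computed by conditioning on whether the guess is correct. The event $u=x$ has probability $2^{-n}$, and on it $a\oplus b=0\oplus f(x,y)=f(x,y)$, so the output is correct with certainty. The complementary event $u\ne x$ has probability $1-2^{-n}$, and on it $a\oplus b=t\oplus f(u,y)$; since $t$ is uniform and independent of $(u,y)$, this bit is uniformly distributed, hence equals $f(x,y)$ with probability exactly $\tfrac12$. Therefore, for every fixed input $(x,y)$,
\begin{equation*}
\Pr\bigl[a\oplus b=f(x,y)\bigr]=2^{-n}\cdot 1+\bigl(1-2^{-n}\bigr)\cdot\tfrac12=\frac{1+2^{-n}}{2},
\end{equation*}
which gives bias exactly $2^{-n}$, uniformly over all inputs and all functions $f$.

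There is no genuinely hard step; the only point deserving a moment's care is the independence used in the case $u\ne x$. One checks that, conditioned on $\{u\ne x\}$, the shared bit $t$ is still an unbiased coin independent of $b=f(u,y)$, because $t$ is drawn independently of $u$ and the input $y$ is fixed; consequently $a\oplus b=t\oplus f(u,y)$ is an unbiased coin regardless of what $f(u,y)$ happens to be. Granting this, the two-case computation above is immediate. It is worth noting for later use that the construction is symmetric under exchanging the roles of Alice and Bob (equivalently of $x$ and $y$), and that it is completely oblivious to the structure of $f$, which is precisely why the bound $2^{-n}$ holds for every function.
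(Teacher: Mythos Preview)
Your proof is correct and is essentially the same as the paper's: the paper lets the shared randomness guess $y$ (Alice outputs $f(x,r)$, Bob outputs $0$ if $r=y$ and a fresh random bit otherwise), whereas you guess $x$ and swap the roles, also using a shared rather than private coin for the ``wrong guess'' case. These differences are cosmetic; the conditioning argument and the resulting bias $2^{-n}$ are identical.
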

\begin{proof}
Let $r\in\{0,1\}^n$ be shared uniform random bits.
Let $a=f(x,r)$.
Let $b = 0$ if $r=y$ and $b=r'$ otherwise where $r'\in\{0,1\}$ is Bob's private uniform random bit.
Then, $a\oplus b =f(x,y)$ with probability $\frac12+\frac1{2^{n+1}}$.
\end{proof}
\begin{definition}
The nonlocal box is said to be isotropic if
\begin{equation*}
\sum_{\substack{a,b\\ a\oplus b = x\wedge y}} p(a,b\mid x,y)
\end{equation*}
 does not depend on $x$ and $y$ and if the marginal distributions for $a$ and $b$ are uniform for any $x$ and $y$.
\end{definition}
It was shown in~\cite{PhysRevA.73.012112, PhysRevLett.102.180502} that the isotropic nonlocal box can be simulated by arbitrary nonlocal box with the same CHSH probability.
\begin{lemma}\label{lem:iso}
Using arbitrary given nonlocal box, the isotropic nonlocal box with the same CHSH probability can be simulated.
\end{lemma}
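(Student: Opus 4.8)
The plan is to depolarize the given box by applying only strictly local reversible operations together with shared randomness. Any such procedure produces a convex mixture of no-signaling boxes, hence again a no-signaling box, so it suffices to exhibit a sequence of such operations after which the box is isotropic and has the same CHSH probability. Write $w(x,y):=\sum_{a\oplus b=x\wedge y}p(a,b\mid x,y)$, so that $P_{\mathrm{CHSH}}=\frac14\sum_{x,y}w(x,y)$; isotropy is exactly the conjunction of the statement that $w(x,y)$ is independent of $(x,y)$ with the statement that the marginal distributions of $a$ and of $b$ are uniform.

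First I would uniformize the marginals: Alice and Bob share a uniform random bit $r$, and each replaces their output by $a\oplus r$ and $b\oplus r$, respectively. This leaves $a\oplus b$ unchanged, hence every $w(x,y)$ and therefore $P_{\mathrm{CHSH}}$ is untouched, while $\Pr[a\oplus r=c\mid x]=\frac12$ for all $c,x$, so Alice's marginal (which by no-signaling is independent of $y$) is uniform, and symmetrically for Bob.

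Next I would symmetrize over the four input pairs using the symmetry group of the CHSH game. Consider the local operation $\rho_A$ in which Alice feeds $x\oplus1$ into the box while Bob feeds his true input $y$ and outputs $b\oplus y$; using the identity $(x\oplus1)\wedge y=(x\wedge y)\oplus y$ one checks that the resulting box has winning probability $w(\bar x,y)$ on input $(x,y)$. The mirror operation $\rho_B$ (Bob negates his input, Alice outputs $a\oplus x$) yields $w(x,\bar y)$. Since $\rho_A$ and $\rho_B$ act on the table $w(\cdot,\cdot)$ as the two commuting involutions flipping the first and the second coordinate, they generate a group isomorphic to $(\mathbb Z/2)^2$ acting simply transitively on $\{0,1\}^2$. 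If Alice and Bob use shared randomness to pick a uniformly random element of this group and apply the corresponding local operations, the new winning probability equals $\frac14\sum_{x,y}w(x,y)=P_{\mathrm{CHSH}}$ for every input. Each of these operations only XORs into an output a bit already known to that party, namely its own input, so applied to the box of the previous step the marginals stay uniform. The composed box is therefore isotropic with the same CHSH probability.

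The step requiring the most care is this second symmetrization: one must verify that $\rho_A$ and $\rho_B$ are genuinely local — each party touches only its own input and output, so that the no-signaling constraints survive every composition — which comes down to the short boolean bookkeeping above, and one must check that these operations do not disturb the uniform marginals already established. Beyond this I do not expect any real obstacle; in fact the isotropic box is uniquely determined by its CHSH probability, so the construction necessarily produces the isotropic box claimed in the statement.
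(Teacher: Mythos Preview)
The paper does not actually supply a proof of this lemma; it merely cites \cite{PhysRevA.73.012112, PhysRevLett.102.180502} for the result. Your argument is correct and is precisely the standard depolarization (twirling) construction found in those references: first XOR a shared uniform bit into both outputs to make the marginals uniform without touching $a\oplus b$, then average over the local symmetry group of the CHSH predicate to make the winning probability independent of $(x,y)$.

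One small point of care: you describe $\rho_A$ and $\rho_B$ as generating a copy of $(\mathbb{Z}/2)^2$. This is accurate for their action on the table $w(\cdot,\cdot)$, which is all you need, but as operations on the full box $p(a,b\mid x,y)$ the two compositions $\rho_A\rho_B$ and $\rho_B\rho_A$ differ by a fixed relabeling of outputs. This does not affect the argument, since either composition realizes the permutation $(x,y)\mapsto(\bar x,\bar y)$ on $w$, preserves uniform marginals, and hence yields the same averaged box; just make sure, when you say ``pick a uniformly random element of this group,'' that you have fixed once and for all four concrete local procedures (say $\mathrm{id},\rho_A,\rho_B,\rho_A\rho_B$) indexed by the four group elements.
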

From Lemma~\ref{lem:iso}, in this study, we assume that all nonlocal boxes are isotropic.
Forster et al.\@ showed that non-isotropic nonlocal boxes can be used for the nonlocality distillation, which is the
amplification of the CHSH probability~\cite{PhysRevLett.102.120401}.
Brunner and Skrzypczyk showed that there exists non-isotropic nonlocal box with $P_{\mathrm{CHSH}}=3/4+\epsilon$
for arbitrary small $\epsilon>0$ which allows the simulation of nonlocal box arbitrarily close to the PR box~\cite{PhysRevLett.102.160403}.
Of course, such nonlocal box cannot be simulated on quantum mechanics even if the CHSH probability of the nonlocal box is achievable by quantum mechanics.
In this study, we do not consider the nonlocality distillation, but consider the XOR protocol using isotropic nonlocal boxes.

\subsection{Fourier analysis}
The Fourier analysis is the main mathematical tool in this work.
\begin{definition}
Any boolean function $f\colon\{+1,-1\}^n\to\{+1,-1\}$ can be represented by a polynomial on $\mathbb{R}$ uniquely
\begin{equation*}
f(x) = \sum_{S\subseteq [n]} \widehat{f}(S)\prod_{i\in S} x_i
\end{equation*}
where $[n]:=\{1,2,\dotsc,n\}$.
Here, $(\widehat{f}(S))_{S\subseteq[n]}$ are called the Fourier coefficients of $f$.
When we consider the Fourier coefficients of boolean function $f\colon\{0,1\}^n\to\{0,1\}$,
we regard $f$ as the function from $\{+1,-1\}^n$ to $\{+1,-1\}$.
From Parseval's identity, the sum of squares of the Fourier coefficients is 1.

Let $\mathrm{supp}(\widehat{f}):=\{S\subseteq[n]\mid \widehat{f}(S)\ne 0\}$.
For $S\subseteq[n]$, let $1_S$ be a vector on $\mathbb{F}_2$ of length $n$ such that $i$-th element of $1_S$ is 1 iff $i\in S$.
Let $\dim(\widehat{f})$ be the Fourier dimension of $f$ which is the dimension of linear space on $\mathbb{F}_2$ spanned by $\{1_S\mid S\in \mathrm{supp}(\widehat{f})\}$.
\end{definition}

\subsection{One-way communication complexities}
We introduce notions on the one-way communication complexity of $f\colon\{0,1\}^n\times\{0,1\}^n\to\{0,1\}$.
Let $M_f$ be a $2^n\times 2^n$ matrix whose $(x,y)$-element is $f(x,y)$.
Let $D_{\rightarrow}(f)$ be the one-way communication complexity of $f$ from Alice to Bob, which is the minimum $m$ such that
there exist functions $s\colon \{0,1\}^n\to\{0,1\}^m$ and $h\colon\{0,1\}^m\times\{0,1\}^n\to\{0,1\}$ satisfying the identity $f(x,y) = h(s(x), y)$.
Similarly, let $D_{\leftarrow}(f)$ be the one-way communication complexity of $f$ from Bob to Alice.
The one-way communication complexities can be characterized by the number of distinct rows and columns of $M_f$, i.e,
$D_\rightarrow(f) = \lceil \log_2 \mathrm{nrows}(M_f)\rceil$ and
$D_\leftarrow(f) = \lceil \log_2 \mathrm{ncols}(M_f)\rceil$
where $\mathrm{nrows}(M_f)$ and $\mathrm{ncols}(M_f)$ denote the number of distinct rows and the number of distinct columns of $M_f$, respectively.
We also define
\begin{align*}
D_{\rightarrow}^\oplus(f) &:= \min_{A\colon \{0,1\}^n\to\{0,1\}} D_{\rightarrow}\left(f(x,y)\oplus A(x)\right)\\
D_{\leftarrow}^\oplus(f) &:= \min_{B\colon \{0,1\}^n\to\{0,1\}} D_{\leftarrow}\left(f(x,y)\oplus B(y)\right).
\end{align*}
Here, $D_\rightarrow^\oplus(f)$ is the minimum number of bits Alice have to send to Bob such that
Alice can compute $a$ and Bob can compute $b$ satisfying $a\oplus b = f(x,y)$.

\subsection{Other notations}
For odd $n$, let $\mathrm{Maj}_{n}\colon\{0,1\}^n\to\{0,1\}$ be the majority function on $n$ variables.
For even $n$, let $\mathrm{Maj}_{n}$ be the set of majority functions on $n$ variables where the definitions for the tie cases are arbitrary.
Since there are $\binom{n}{\frac{n}2}$ tie cases, $|\mathrm{Maj}_n|=2^{\binom{n}{\frac{n}2}}$ for even $n$.
Note that a function $f\colon\{0,1\}^{2k}\to\{0,1\}$ which ignores one of the $2k$ input variables, and outputs the majority of the other $2k-1$ variables is a member of $\mathrm{Maj}_{2k}$.
Finally, let $\delta:=2P_{\mathrm{CHSH}}-1$, i.e., $P_{\mathrm{CHSH}}=\frac{1+\delta}2$.
Here, we call $\delta$ the bias of the CHSH probability.

\section{Brassard et al.'s protocol}\label{sec:brassard}
Brassard et al.'s basic idea is bias amplification by $\mathrm{Maj}_3$.
They showed that $\mathrm{Maj_3}$ can be computed by using two PR boxes.
Here, we give a simple argument showing that two PR boxes are sufficient to compute $\mathrm{Maj}_3(x\oplus y)$.
The $\mathbb{F}_2$-polynomial representation of the 3-input majority function is $\mathrm{Maj}_3(z_1,z_2,z_3)=z_1z_2\oplus z_2z_3\oplus z_3z_1$.
Hence, one obtains the representation
\begin{align}
&\mathrm{Maj}_3(x_1\oplus y_1, x_2\oplus y_2, x_3\oplus y_3)\nonumber\\
&=
(x_1\oplus x_2)(y_2\oplus y_3) \oplus (x_2\oplus x_3)(y_1\oplus y_2)\nonumber\\
&\quad\oplus (x_1x_2 \oplus x_2x_3 \oplus x_3 x_1)
\oplus (y_1y_2 \oplus y_2y_3 \oplus y_3 y_1).
\label{eq:maj3}
\end{align}
The following is the protocol for computing $a$ and $b$.
Alice and Bob can compute their local terms 
$\mathrm{Maj}_3(x):=x_1x_2 \oplus x_2x_3 \oplus x_3 x_1$
and
$\mathrm{Maj}_3(y):=y_1y_2 \oplus y_2y_3 \oplus y_3 y_1$
without communication, respectively.
For the each of first two terms in~\eqref{eq:maj3}, they use the PR box.
For the first PR box, Alice and Bob input $x_1\oplus x_2$ and $y_2\oplus y_3$
and obtain $a_1$ and $b_1$, respectively.
Similarly,
for the second PR box, Alice and Bob input $x_2\oplus x_3$ and $y_1\oplus y_2$
and obtain $a_2$ and $b_2$, respectively.
Then, Alice and Bob output $a:= \mathrm{Maj}_3(x)\oplus a_1\oplus a_2$ and $b:= \mathrm{Maj}_3(y)\oplus b_1\oplus b_2$, respectively.
This is the XOR protocol without error using two PR boxes.
Von Neumann showed that the probability of correctness of computations sufficiently close to 1/2 is amplified by noisy $\mathrm{Maj}_3$ iff the computation of $\mathrm{Maj}_3$ succeeds with probability greater than 5/6~\cite{neumann1956}.
Hence, the threshold of the above protocol is given by the condition $P_{\mathrm{CHSH}}^2 + (1-P_{\mathrm{CHSH}})^2 > 5/6\iff P_{\mathrm{CHSH}}>\frac{3+\sqrt{6}}6$.
On this condition, the iterative applications of $\mathrm{Maj}_3$ to independent samples obtained by the protocol in Lemma~\ref{lem:1bit2n}
 give a constant bias.

Brassard et al.\@ invented the above elegant protocol, and showed that if $P_{\mathrm{CHSH}}>\frac{3+\sqrt{6}}6$, there exists an XOR protocol with constant bias for arbitrary function $f$.
However, there is no reason why $\mathrm{Maj}_3$ should be used for the bias amplification.
We can use arbitrary functions, e.g., the majority function on 5 variables, in place of $\mathrm{Maj}_3$.
Of course, on given number $n$ of input variables, the majority functions $\mathrm{Maj}_n$ minimize the threshold value, corresponding to 5/6 for $\mathrm{Maj}_3$.
However, non-majority function may require smaller number of nonlocal boxes than the majority functions.
Hence, non-majority functions are also candidates for the generalization of Brassard et al's protocol.
We have to generalize two quantities ``2'' and ``5/6'' in the case of $\mathrm{Maj}_3$, which are the number of nonlocal boxes needed for the computation
and the threshold for the probability of the correctness of computation of the function for the bias amplification, respectively.
In this work, these two quantities are clearly characterized.

Although we can consider general function $f\colon\{0,1\}^n\times\{0,1\}^n\to\{0,1\}$ in place of $\mathrm{Maj}_3(x\oplus y)$,
in this study, we restrict $f$ to be XOR function, i.e., $f(x,y)=g^\oplus(x,y):=g(x\oplus y)$ for some $g\colon\{0,1\}^n\to\{0,1\}$.
It seems to be a natural restriction since the inputs $x$ and $y$ have meaning only when their XOR is taken.
Linden et al.\@ showed that quantum mechanics has no advantage on XOR protocol for computation of XOR function when the input distribution is also XOR function~\cite{PhysRevLett.99.180502}.

\section{Non-adaptive PR-correct protocols}
Brassard et al.\@ consider the protocol according to the $\mathbb{F}_2$-polynomial representation~\eqref{eq:maj3} for computing $\mathrm{Maj}_3^\oplus$.
In this section, we show that for arbitrary given $f\colon \{0,1\}^n\times\{0,1\}^n\to\{0,1\}$, this protocol is the best protocol for computing $f(x,y)$
among all protocols satisfying the non-adaptivity and the PR-correctness.

\begin{definition}
An XOR protocol is said to be non-adaptive if inputs for nonlocal boxes does not depend on outputs of other nonlocal boxes.
An XOR protocol is said to be PR-correct if the protocol computes the target function $f(x,y)$ without error when the nonlocal boxes are PR boxes.
An XOR protocol is said to be non-redundant if the inputs $(l_i(x),r_i(x))_{i=1,\dotsc,t}$  for the nonlocal box satisfy
\begin{align}
&A(x) \oplus B(y) \oplus \bigoplus_{i=1}^{t} \left(C_i\wedge l_i(x)\wedge r_i(y)\right) = 0\nonumber\\
& \iff (C_i)_{i=1,\dotsc,t} = 0, A(x)=B(y).
\label{eq:red}
\end{align}
\end{definition}

The following lemma was shown by Kaplan et al.~\cite{kaplan2011nonlocal}.
Here, we give a short proof using Fourier analysis.
\begin{lemma}\label{lem:parity}
The outputs of both players in non-adaptive PR-correct non-redundant protocol must be parity of the outputs of nonlocal boxes and a function of local inputs.
\end{lemma}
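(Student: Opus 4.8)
The plan is to extract from PR-correctness alone a rigid functional equation linking the two players' output maps, and then to combine non-redundancy with a short Fourier argument to collapse that equation to the parity form. Since PR-correctness is an exact identity, it must hold for each value of the shared randomness, so I may assume the protocol is deterministic and fold the randomness into the final ``function of local inputs'' (the caveat about preserving non-redundancy is discussed below). In a non-adaptive protocol Alice's input to box $i$ is $l_i(x)$ and Bob's is $r_i(y)$, all box outputs $u=(a_1,\dots,a_t)$ and $v=(b_1,\dots,b_t)$ are available to the respective players, and with PR boxes $v=u\oplus w(x,y)$ where $w(x,y)_i:=l_i(x)\wedge r_i(y)$. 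Writing Alice's output as $a(x,u)$ and Bob's as $b(y,v)$, PR-correctness becomes $a(x,u)\oplus b\bigl(y,\,u\oplus w(x,y)\bigr)=f(x,y)$ for all $x,y,u$.

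The next step is to ``solve'' this equation. Fix a reference pair $(x_0,y_0)$ and set $\phi(u):=a(x_0,u)$. Specialising the identity to $x=x_0$ expresses $b(y,\cdot)$ as an $\mathbb{F}_2$-translate of $\phi$ XORed with a constant, and then specialising to $y=y_0$ does the same for $a(x,\cdot)$; concretely $b(y,v)=f(x_0,y)\oplus\phi\bigl(v\oplus w(x_0,y)\bigr)$ and $a(x,u)=f(x,y_0)\oplus f(x_0,y_0)\oplus\phi\bigl(u\oplus w(x,y_0)\oplus w(x_0,y_0)\bigr)$. Substituting both back into the original identity and using the componentwise identity $w(x,y_0)\oplus w(x_0,y_0)\oplus w(x,y)\oplus w(x_0,y)=\sigma(x,y)$, with $\sigma(x,y)_i:=\bigl(l_i(x)\oplus l_i(x_0)\bigr)\wedge\bigl(r_i(y)\oplus r_i(y_0)\bigr)$, everything cancels down to $\phi(P)\oplus\phi\bigl(P\oplus\sigma(x,y)\bigr)=G(x,y)$ for all $P$ and all $x,y$, where $G(x,y):=f(x,y)\oplus f(x,y_0)\oplus f(x_0,y)\oplus f(x_0,y_0)$. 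Thus every $\sigma(x,y)$ is a \emph{period} of $\phi$ (translating $\phi$ by it changes $\phi$ only by a global bit), and the set $\Sigma$ of all periods of $\phi$ is a subgroup of $\mathbb{F}_2^t$ containing every $\sigma(x,y)$.

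Now I would bring in non-redundancy. Expanding $\bigl(l_i(x)\oplus l_i(x_0)\bigr)\wedge\bigl(r_i(y)\oplus r_i(y_0)\bigr)$ and discarding the terms that depend on $x$ alone, on $y$ alone, or on neither, one sees that for $C\in\mathbb{F}_2^t$ the function $\langle C,\sigma(x,y)\rangle$ agrees with $\bigoplus_i C_i\bigl(l_i(x)\wedge r_i(y)\bigr)$ modulo $\mathbb{F}_2$-functions of $x$ and of $y$; so a nonzero $C$ annihilating all $\sigma(x,y)$ would give $\bigoplus_i C_i\bigl(l_i(x)\wedge r_i(y)\bigr)=A(x)\oplus B(y)$ identically for some $A,B$, contradicting~\eqref{eq:red}. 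Hence $\{\sigma(x,y)\}$ spans $\mathbb{F}_2^t$ and therefore $\Sigma=\mathbb{F}_2^t$. Writing $\phi$ in its $\pm1$-valued Fourier expansion and comparing coefficients in the period identity (using $\phi(P)^2=1$) shows that for any $S,S'\in\mathrm{supp}(\widehat{\phi})$ the vector $1_S\oplus 1_{S'}$ is orthogonal over $\mathbb{F}_2$ to every period; as $\Sigma=\mathbb{F}_2^t$ this forces $S=S'$, so $\mathrm{supp}(\widehat{\phi})$ is a single set $S_0$. Thus $\phi$ is a single Fourier monomial, i.e.\ the parity of $(a_i)_{i\in S_0}$ up to a global XOR, and by the translation formulas above $a(x,u)=A(x)\oplus\bigoplus_{i\in S_0}a_i$ and $b(y,v)=B(y)\oplus\bigoplus_{i\in S_0}b_i$ for suitable functions $A$ of $x$ and $B$ of $y$ (each also allowed to depend on the shared randomness), which is the claim.

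The delicate point is the passage from non-redundancy to ``$\{\sigma(x,y)\}$ spans $\mathbb{F}_2^t$'': it requires working cleanly in the quotient $\mathbb{F}_2^{X\times Y}/(\mathbb{F}_2^X+\mathbb{F}_2^Y)$ and verifying that the reference point $(x_0,y_0)$ genuinely drops out of the relevant terms, so that~\eqref{eq:red} applies verbatim. One should also check that conditioning on the shared randomness does not destroy non-redundancy (or else state the lemma for the deterministic conditionals). The remaining parts—solving the functional equation and the Fourier comparison once $\Sigma=\mathbb{F}_2^t$—are routine.
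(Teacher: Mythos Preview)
Your proof is correct and follows essentially the same route as the paper's: both first show, by specialising the PR-correctness identity to a reference input pair, that $a(x,\cdot)$ and $b(y,\cdot)$ are all $\mathbb{F}_2$-translates of a single function $\phi$ (the paper's $F$), and then combine a Fourier/character argument with non-redundancy to force $\mathrm{supp}(\widehat{\phi})$ to be a singleton. The paper first normalises so that $l_i(0)=r_i(0)=0$ and phrases the Fourier step via the explicit product expansion of $ab$ in~\eqref{eq:PRc}, whereas you keep a general reference point $(x_0,y_0)$ and cast the same computation in the language of ``periods'' of $\phi$; these are purely presentational differences, and your caveat about conditioning on shared randomness is apt but harmless since PR-correctness already forces each deterministic conditional to be PR-correct with the same (randomness-independent) box inputs.
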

\begin{proof}
Let $l_1(x),\dotsc,l_t(x)$ and $r_1(y),\dotsc,r_t(y)$ be the inputs of nonlocal boxes from Alice and Bob, respectively.
Let $a_1,\dotsc,a_t$ and $b_1,\dotsc,b_t$ be the outputs of the nonlocal boxes for Alice and Bob, respectively.
From any protocol, one can obtain a modified protocol using $(l'_i(x):=l_i(x)\oplus l_i(0),r'_i(x):=r_i(y)\oplus r_i(0))_{i=1,\dotsb,t}$ as the inputs for nonlocal boxes
since replacements of $a_i$ and $b_i$ by $a'_i\oplus l'_i(x)r_i(0) \oplus l_i(0)r_i(0)$ and $b'_i\oplus l_i(0)r'_i(y)$ for $i=1,\dotsc,t$, respectively, simulate the original protocol where $(a'_i, b'_i)_{i=1,\dotsc t}$ is the outputs of the nonlocal boxes in the modified protocol.
This transformation preserves non-adaptivity, PR-correctness and non-redundancy.
This transformation also preserves whether or not the outputs of both players are parity of the outputs of nonlocal boxes and a function of local inputs.
Hence, without loss of generality, we can assume that $l_1(0)=\dotsb=l_t(0)=r_1(0)=\dotsb=r_t(0)=0$.
Assume that $a=u_x(a_1,\dotsc,a_t)$ and $b=v_y(b_1,\dotsc,b_t)$.
Since the protocol is PR-correct, $a\oplus b = u_x(a_1,\dotsc,a_t)\oplus v_y(a_1\oplus z_1(x,y),\dotsc,a_t\oplus z_t(x,y))$ must be constant for all $(a_1,\dotsc,a_t)\in\{0,1\}^t$
where $z_i(x,y):=l_i(x)\wedge r_i(y)$.
By letting $x=0$ ($y=0$), we obtain that $v_y$ $(u_x)$ is equal to $u_0$ ($v_0$) or its negation for any $y$ ($x$), respectively.
Hence, there exists boolean functions $F\colon\{0,1\}^t\to\{0,1\}$, $\varphi, \psi\colon \{0,1\}^n\to\{0,1\}$ such that 
$u_x(a_1,\dotsc,a_t) = \varphi(x)\oplus F(a_1,\dotsc,a_t)$ and
$v_y(b_1,\dotsc,b_t) = \psi(y)\oplus F(b_1,\dotsc,b_t)$.
On the other hand, it holds on the $\{+1,-1\}$ domain that
\begin{align}
&ab = \left(\sum_{S\subseteq [t]} \widehat{u}_x(S) \prod_{i\in S} a_i\right)
\left(\sum_{S\subseteq [t]} \widehat{v}_y(S) \prod_{i\in S} (a_i z_i(x,y))\right)\nonumber\\
&= \sum_{S_1, S_2\subseteq [t]} \widehat{u}_x(S_1) \widehat{v}_y(S_2) \prod_{i\in S_2} z_i(x,y) \prod_{i\in (S_1\cup S_2)-(S_1\cap S_2)} a_i\nonumber\\
&= \sum_{S\subseteq [t]} \left(\sum_{\substack{S_1, S_2,\\ (S_1\cup S_2) - (S_1\cap S_2)= S}} \widehat{u}_x(S_1) \widehat{v}_y(S_2) \prod_{i\in S_2} z_i(x,y)\right)\nonumber\\
&\qquad \cdot\prod_{i\in S} a_i
\label{eq:PRc}
\end{align}
This is the Fourier expansion of
$u_x(a_1,\dotsc,a_t)\oplus v_y(a_1\oplus z_1(x,y),\dotsc,a_t\oplus z_t(x,y))$ as a function of $a_1,\dotsc,a_t$.
Since the function must be constant, the Fourier coefficients for the empty set must be $\pm1$, i.e.,
\begin{align*}
\sum_{S_1\subseteq[t]} \varphi(x)\psi(y)\widehat{F}(S_1)^2 \prod_{i\in S_1} z_i(x,y) \in \{+1,-1\}
\end{align*}
for any $x,y\in\{0,1\}^n$.
Hence, for any $x,y\in\{0,1\}^n$, $\prod_{i\in S_1}z_i(x,y)$ must be common
for all $S_1 \in\mathrm{supp}(\widehat{F})$.
The equality $\prod_{i\in S_1} z_i(x,y)=\prod_{i\in S_2} z_i(x,y)$ for $S_1\ne S_2$ implies $\prod_{i\in (S_1\cup S_2)-(S_1\cap S_2)} z_i(x,y)=1$
that means the existence of a redundant nonlocal box.
Hence, $\widehat{F}(S_1)\ne 0$ for unique $S_1\subseteq [t]$.
It implies that $u_x$ ($v_y$) is the parity of variables in $S_1$ and $\varphi(x)$ ($\psi(y)$), respectively.
\end{proof}
Naturally, we can ask whether or not the non-redundancy is restriction, i.e., whether or not we can reduce the error probability of the protocol by using the redundancy when the nonlocal boxes are not the PR boxes.
The following lemma says that redundancy does not help to reduce the error probability of non-adaptive PR-correct protocol.

\begin{lemma}\label{lem:redundant}
For arbitrary given non-adaptive PR-correct protocol, there exists non-adaptive PR-correct non-redundant protocol whose error probability is at most that of the original protocol for any bias $\delta$ of the CHSH probability.
\end{lemma}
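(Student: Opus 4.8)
The plan is to show that any redundant non-adaptive PR-correct protocol can be \emph{compressed}, in a single step, to a non-redundant one on $d:=\dim_{\mathbb F_2}\mathrm{span}\{z_1,\dots,z_t\}$ boxes (where $z_i(x,y):=l_i(x)\wedge r_i(y)$), with error probability no larger for every relevant $\delta$. First I would run the opening of the proof of Lemma~\ref{lem:parity}: after the harmless normalization $l_i(0)=r_i(0)=0$, which preserves non-adaptivity, PR-correctness, redundancy, and the error probability at every $\delta$, a PR-correct protocol must output $\varphi(x)\oplus F(a_1,\dots,a_t)$ and $\psi(y)\oplus F(b_1,\dots,b_t)$, and $\mathrm{supp}(\widehat F)$ is contained in a single coset of the subspace $W:=\{T\subseteq[t]\mid\bigoplus_{i\in T}z_i(x,y)=0\ \forall x,y\}$ of $\mathbb F_2^{\,t}$ (this is exactly what ``$\prod_{i\in S_1}z_i(x,y)$ is common for all $S_1\in\mathrm{supp}(\widehat F)$'' says). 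Using isotropy and the mutual independence of the box noises, one checks that the error bit equals $F(a')\oplus F(a'\oplus e)$ with $a'$ uniform on $\mathbb F_2^{\,t}$ and an independent noise vector $e$ with $\Pr[e_i=1]=(1-\delta)/2$, hence the error probability is exactly $\tfrac12\bigl(1-\sum_S\widehat F(S)^2\delta^{|S|}\bigr)$ --- it depends on $F$ only through its Fourier weight at each level. Finally, with the normalization in force, setting $x=0$ or $y=0$ in \eqref{eq:red} forces the affine parts $A,B$ to be equal constants, so the protocol is redundant iff $z_1,\dots,z_t$ are $\mathbb F_2$-linearly dependent as functions, i.e.\ iff $d<t$.

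Next I would construct the compressed protocol. Choose $S_0\in\mathrm{supp}(\widehat F)$ with $|S_0|=m:=\min_{S\in\mathrm{supp}(\widehat F)}|S|$, pick $S_0'\subseteq S_0$ with $\{z_i\}_{i\in S_0'}$ a basis of $\mathrm{span}\{z_i\}_{i\in S_0}$, and extend it to a basis $\{z_i\}_{i\in K}$, $K\supseteq S_0'$, $|K|=d$, of $\mathrm{span}\{z_1,\dots,z_t\}$. The new protocol uses $d$ isotropic boxes with input functions $(l_i,r_i)_{i\in K}$; writing $z_i=\bigoplus_{k\in K}D_{ik}z_k$, Alice forms $\tilde a_i:=\bigoplus_{k\in K}D_{ik}a_k$ and outputs $\varphi(x)\oplus F((\tilde a_i)_{i\in[t]})$, and Bob does the same with the $b$'s. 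This is manifestly non-adaptive. It is PR-correct: in the noiseless case $\tilde a_i\oplus\tilde b_i=\bigoplus_k D_{ik}z_k(x,y)=z_i(x,y)$, so $F$ receives exactly the inputs it would in the original noiseless run, where $F(v)\oplus F(v\oplus z(x,y))=f(x,y)\oplus\varphi(x)\oplus\psi(y)$ for all $v$. It is non-redundant: the new box-input vectors $(z_k(x,y))_{k\in K}$ span all of $\mathbb F_2^{\,d}$ because the $z_k$, $k\in K$, are $\mathbb F_2$-independent functions, so its analogue of $W$ is trivial.

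Finally I would compare the errors. The compressed output function is $\Phi(a_1,\dots,a_d):=F(D\mathbf a)$, so $\widehat\Phi(U)=\sum_{S:\,D^{\mathrm T}1_S=1_U}\widehat F(S)$. The kernel of $S\mapsto D^{\mathrm T}1_S$ is precisely $W$ (since the column space of $D$ equals $\mathrm{span}\{z(x,y)\}$), and $\mathrm{supp}(\widehat F)$ lies in one $W$-coset, so $\widehat\Phi$ is supported on the single set $U_0$ with $1_{U_0}=D^{\mathrm T}1_{S_0}$, and there $\widehat\Phi(U_0)=\sum_S\widehat F(S)=F(+1,\dots,+1)=\pm1$. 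Hence the compressed error is $\tfrac12(1-\delta^{|U_0|})$. Since $D^{\mathrm T}1_{S_0}$ is the coordinate vector of $g:=\bigoplus_{i\in S_0}z_i$ in the basis $\{z_k\}_{k\in K}$ and $g\in\mathrm{span}\{z_i\}_{i\in S_0'}$, we get $|U_0|\le|S_0'|\le m$; then for $\delta\in[0,1]$, $\ \delta^{|U_0|}\ge\delta^m=\delta^m\sum_S\widehat F(S)^2\ge\sum_S\widehat F(S)^2\delta^{|S|}$, so the compressed error does not exceed the original error for any such $\delta$.

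The part I expect to be the real work is this last comparison: the reduction to fewer boxes is forced to repackage the box outputs through a linear map, and in general such a map can move Fourier weight of the output function to higher levels and thereby \emph{increase} the error. What rescues the argument is the combination of the two structural facts above --- that the error only sees the Fourier levels, and that $\mathrm{supp}(\widehat F)$ sits in a single coset of $W$ --- together with the deliberate choice of basis through a minimum-size $S_0$, which pins the single surviving Fourier coefficient of $\Phi$ at level $|U_0|\le m$. Getting this bookkeeping right (in particular the identifications $\ker(S\mapsto D^{\mathrm T}1_S)=W$ and $g\in\mathrm{span}\{z_i\}_{i\in S_0'}$, and the harmlessness of the normalization for the error) is where the care is needed.
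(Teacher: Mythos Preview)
Your proof is correct and follows essentially the same route as the paper: both compute the original error probability as $\tfrac12\bigl(1-\mathbf{Stab}_\delta(F)\bigr)$ using isotropy and the structure of $F$ extracted in the proof of Lemma~\ref{lem:parity}, then pick a minimum-size $S^*\in\mathrm{supp}(\widehat F)$, build a non-redundant parity-output protocol on at most $|S^*|$ boxes, and finish with $\delta^{|S^*|}\ge\sum_S\widehat F(S)^2\delta^{|S|}$. The only difference is packaging: the paper directly replaces $F$ by the parity on $S^*$ and then ``shrinks'' $S^*$ to eliminate dependencies among the $z_i$, whereas you first compress to a basis $K$ via the linear map $D$ and then discover, through the Fourier computation of $\Phi=F\circ D$, that the result is automatically that same parity on $U_0\subseteq S_0'$---so your compressed protocol and the paper's shrunk protocol coincide.
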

\begin{proof}
As in the proof of Lemma~\ref{lem:parity}, we can assume without loss of generality that $l_1(0)=\dotsb=l_t(0)=r_1(0)=\dotsb=r_t(0)=0$.
Similarly to~\eqref{eq:PRc}, when the nonlocal boxes are not necessarily the PR boxes, $ab$ is equal to
\begin{align*}
&\sum_{S\subseteq [t]} \left(\sum_{\substack{S_1, S_2,\\ (S_1\cup S_2) - (S_1\cap S_2)= S}} \widehat{u}_x(S_1) \widehat{v}_y(S_2) \prod_{i\in S_2} (e_iz_i(x,y))\right)\\
&\quad\cdot \prod_{i\in S} a_i
\end{align*}
where $e_i$ represents the error of the output of $i$-th nonlocal box, i.e., $e_i=+1$ if the $i$-th nonlocal box computes correctly and $e_i=-1$ otherwise.
Recall that the bias of the CHSH probability is $\delta$, i.e., the expectation of $e_i$ is $\delta$.
Since the nonlocal boxes are isotropic, $e_i$ is independent of any other variables $x$, $y$, $(a_j)_{j\in[t]}$ and $(e_j)_{j\in[t]\setminus\{i\}}$ for $i\in[t]$.
Since the nonlocal boxes are isotropic, $a_i$ is uniformly distributed for all $i\in[t]$.
Hence, the expectation of $ab$ (the bias of $a\oplus b$) is
\begin{align*}
&\sigma(x,y) \varphi(x)\psi(y)\sum_{S_1\subseteq[t]}\widehat{F}(S_1)^2 \delta^{|S_1|}\\
&=:\sigma(x,y)\varphi(x)\psi(y) \mathbf{Stab}_\delta(F)
\end{align*}
where $\sigma(x,y)$ denotes the common sign of $\prod_{i\in S_1}z_i(x,y)\in\{+1,-1\}$
 for all $S_1\in \mathrm{supp}(\widehat{F})$.
Since the protocol is PR-correct, $\sigma(x,y)\sigma(x)\psi(y)\in\{+1,-1\}$ must be equal to $f(x,y)$.
Hence, the output of the protocol is correct with probability $(1+\mathbf{Stab}_{\delta}(F))/2$.
On the other hand, since $\prod_{i\in S}z_i(x,y)\in\{+1,-1\}$ is common for all $S\in\mathrm{supp}(\widehat{F})$,
we can obtain a new non-adaptive PR-correct protocol by replacing
$u_x(a_1,\dotsc,a_t)$ and $v_y(b_1,\dotsc,b_t)$ by $\varphi(x)\oplus \bigoplus_{i\in S^*} a_i$ and $\psi(y)\oplus \bigoplus_{i\in S^*} b_i$
for $S^*:=\mathop{\mathrm{argmin}}_{S\in\mathrm{supp}(\widehat{F})} |S|$, respectively.
In order to obtain non-adaptive PR-correct non-redundant protocol,
we shrink the set $S^*$ to $T\subseteq S^*$ if $S^*$ includes the redundancy
(The local terms $\varphi(x)$ and $\psi(y)$ should also be modified according to the shrinkage).
The bias of the probability of correctness of the protocol is $\delta^{|T|}\ge \delta^{|S^*|}\ge \mathbf{Stab}_{\delta}(F)$.
\end{proof}

Lemma~\ref{lem:redundant} implies that if we are interested in the minimization of the error probability among all non-adaptive PR-correct protocols,
we only have to consider non-adaptive PR-correct non-redundant protocols.

\section{The number of nonlocal boxes}
Lemma~\ref{lem:parity} implies that arbitrary non-adaptive PR-correct non-redundant protocol corresponds to $\mathbb{F}_2$-polynomial representation of $f(x,y)$
\begin{equation}
f(x,y) = A(x) \oplus B(y) \oplus \bigoplus_{i=1}^t l_i(x) r_i(y).
\label{eq:F2expand}
\end{equation}
Since the bias of the correctness of the corresponding protocol is $\delta^t$,
we define the following measure of the complexity.
\begin{definition}
The nonlocal box complexity $\mathrm{NLBC}(f)$ is the minimum $t$ such that there exists a representation~\eqref{eq:F2expand}.
\end{definition}
The nonlocal box complexity can be characterized by the rank of some matrix on $\mathbb{F}_2$.
The following theorem slightly generalizes a theorem in~\cite{kaplan2011nonlocal}.

\begin{theorem}\label{thm:f2rank}
For any $f\colon\{0,1\}^n\times\{0,1\}^n\to\{0,1\}$,
\begin{equation*}
\mathrm{NLBC}(f) = \mathrm{rank}_{\mathbb{F}_2}(M_{f'})
\end{equation*}
where $f'(x,y) = f(x,y) \oplus f(x,0) \oplus f(0,y) \oplus f(0,0)$,
and where $M_{f'}$ is a $2^n\times 2^n$ matrix on $\mathbb{F}_2$ such that its $(x,y)$-element is equal to $f'(x,y)$.
\end{theorem}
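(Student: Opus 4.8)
The plan is to reduce $\mathrm{NLBC}(f)$ to a bilinear rank problem over $\mathbb{F}_2$ and then identify that rank with $\mathrm{rank}_{\mathbb{F}_2}(M_{f'})$. First I would massage the representation~\eqref{eq:F2expand}. Given $f(x,y) = A(x) \oplus B(y) \oplus \bigoplus_{i=1}^t l_i(x) r_i(y)$, I substitute $(x,0)$, $(0,y)$ and $(0,0)$ and XOR the three resulting identities with the original one: the terms $A(x)$, $A(0)$, $B(y)$, $B(0)$ each occur an even number of times and cancel, while each bilinear term collapses as $l_i(x)r_i(y) \oplus l_i(x)r_i(0) \oplus l_i(0)r_i(y) \oplus l_i(0)r_i(0) = (l_i(x)\oplus l_i(0))(r_i(y)\oplus r_i(0))$. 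Hence $f'(x,y) = \bigoplus_{i=1}^t \tilde l_i(x)\tilde r_i(y)$ with $\tilde l_i(x) := l_i(x)\oplus l_i(0)$ and $\tilde r_i(y):=r_i(y)\oplus r_i(0)$, so $f'$ admits an $\mathbb{F}_2$-bilinear decomposition into $t$ terms. Conversely, if $f'(x,y) = \bigoplus_{i=1}^t l_i(x) r_i(y)$ for some $\mathbb{F}_2$-valued functions $l_i,r_i$, then putting $A(x):=f(x,0)$ and $B(y):=f(0,y)\oplus f(0,0)$ gives $f(x,y) = A(x)\oplus B(y) \oplus f'(x,y)$, which is a representation of the form~\eqref{eq:F2expand} with the same number $t$ of bilinear terms. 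Therefore $\mathrm{NLBC}(f)$ equals the minimum number of terms in an $\mathbb{F}_2$-bilinear decomposition of $f'$.

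The second step is to recognize this minimum as a matrix rank. Writing $u_i\in\mathbb{F}_2^{2^n}$ for the column vector $(l_i(x))_x$ and $v_i\in\mathbb{F}_2^{2^n}$ for $(r_i(y))_y$, the identity $f'(x,y) = \bigoplus_{i=1}^t l_i(x) r_i(y)$ is exactly $M_{f'} = \sum_{i=1}^t u_i v_i^{\mathsf{T}}$ over $\mathbb{F}_2$, i.e.\ a sum of $t$ rank-one matrices. Over any field, the least number of rank-one matrices whose sum is a given matrix is precisely its rank, so the minimum admissible $t$ equals $\mathrm{rank}_{\mathbb{F}_2}(M_{f'})$. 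Combined with the first step this gives $\mathrm{NLBC}(f) = \mathrm{rank}_{\mathbb{F}_2}(M_{f'})$.

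I do not expect a serious obstacle; the only point requiring care is the bookkeeping in the first step, namely verifying that passing from $f$ to $f'$ is term-for-term — it neither creates nor destroys bilinear terms — so that representations~\eqref{eq:F2expand} of $f$ and bilinear decompositions of $f'$ are in the required correspondence, the freedom in choosing $A$ and $B$ being immaterial since $f'$ is a fixed function of $f$. One should also note explicitly that no constraint is placed on the functions $l_i,r_i$ beyond being $\mathbb{F}_2$-valued, so every rank-one decomposition of $M_{f'}$ is realized by an actual non-adaptive PR-correct non-redundant protocol; this is what makes the characterization an equality rather than merely an inequality. The mild generalization over~\cite{kaplan2011nonlocal} lies in the affine correction (subtracting $f(x,0)$, $f(0,y)$, $f(0,0)$ rather than only a global constant), which is forced precisely because the protocol's local inputs $l_i,r_i$ and outputs $A,B$ are affine, not linear, in $x$ and $y$.
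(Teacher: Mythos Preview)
Your proposal is correct and follows essentially the same approach as the paper: both arguments reduce $\mathrm{NLBC}(f)$ to the minimum length of an $\mathbb{F}_2$-bilinear decomposition of $f'$ (via the affine correction $f\mapsto f'$ absorbing the $A,B$-terms and the constant parts of $l_i,r_i$), and then identify that minimum with $\mathrm{rank}_{\mathbb{F}_2}(M_{f'})$. The only cosmetic difference is that the paper phrases the rank step as a two-sided inequality through a matrix factorization $M_{f'}=UV$, whereas you invoke directly the equivalent fact that rank equals the minimum number of rank-one summands.
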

\begin{proof}
First, we show
$\mathrm{NLBC}(f) \le \mathrm{rank}_{\mathbb{F}_2}(M_{f'})$.
If $\mathrm{rank}_{\mathbb{F}_2}(M_{f'})=r$, there is a matrix factorization $M_{f'}=UV$
for some $2^n\times r$ matrix $U$ and $r\times 2^n$ matrix $V$.
It implies that $f'(x,y)=\bigoplus_{i=1}^r a_i(x) b_i(y)$ where $a_i(x)$ denotes $(x,i)$ element of $U$ and where $b_i(y)$ denotes $(i,y)$ element of $V$.
Hence, it holds $f(x,y) = \left(f(x,0)\oplus f(0,0)\right) \oplus f(0,y) \oplus \bigoplus_{i=1}^r a_i(x)b_i(y)$, and hence $\mathrm{NLBC}(f)\le r$.

Conversely,
if $\mathrm{NLBC}(f)=t$, there is a representation 
$f(x,y) = A(x) \oplus B(y) \oplus \bigoplus_{i=1}^t l_i(x) r_i(y)$.
There also exists a representation
$f'(x,y) = A'(x) \oplus B'(y) \oplus \bigoplus_{i=1}^t l_i(x) r_i(y)$.
Since $f'(0,y)=f'(x,0)=0$ for all $x$ and $y$,
by expanding constant terms in $l_i(x)$ and $r_i(x)$, we obtain a representation
$f'(x,y) = \bigoplus_{i=1}^t l'_i(x) r'_i(y)$.
It implies that there is a matrix factorization $M_{f'}=UV$
for $2^n\times r$ matrix $U$ and $r\times 2^n$ matrix $V$
where $(x,i)$ element of U is $l'_i(x)$ and $(i,y)$ element of $V$ is $r'_i(y)$.
Hence, $\mathrm{rank}_{\mathbb{F}_2}(M_{f'})\le t$.
\end{proof}
\begin{remark}
If we restrict the decomposition to be symmetric, i.e., $l_i = r_i$ for all $i=1,\dotsc,t$, extra 1 dimension is required for arbitrary XOR function $g^\oplus$~\cite{doi:10.1137/0204014}.
\end{remark}

\begin{lemma}\label{lem:nlbc1}
For any $g\colon\{0,1\}^n\to\{0,1\}$, $\mathrm{NLBC}(g^\oplus)=0$ only when $g$ is a parity of some variables or its negation.
Furthermore, $\mathrm{NLBC}(g^\oplus)$ cannot be equal to 1.
\end{lemma}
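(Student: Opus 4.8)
The plan is to reduce everything to Theorem~\ref{thm:f2rank}. Write $h(x,y):=(g^\oplus)'(x,y)=g(x\oplus y)\oplus g(x)\oplus g(y)\oplus g(0)$, so that $\mathrm{NLBC}(g^\oplus)=\mathrm{rank}_{\mathbb{F}_2}(M_h)$. The single structural fact driving both claims is that $M_h$ is \emph{symmetric}, since $g(x\oplus y)=g(y\oplus x)$, and has \emph{zero diagonal}, since $h(x,x)=g(0)\oplus g(x)\oplus g(x)\oplus g(0)=0$; in other words $M_h$ is the matrix of an alternating bilinear form over $\mathbb{F}_2$.

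For the first statement I would argue that $\mathrm{NLBC}(g^\oplus)=0$ is equivalent to $M_h=0$, i.e.\ to $g(x\oplus y)=g(x)\oplus g(y)\oplus g(0)$ for all $x,y$. Setting $G(z):=g(z)\oplus g(0)$ turns this into $G(x\oplus y)=G(x)\oplus G(y)$, so $G$ is a group homomorphism $(\mathbb{F}_2^n,\oplus)\to(\mathbb{F}_2,\oplus)$, hence $G(z)=\bigoplus_{i\in S}z_i$ for a unique $S\subseteq[n]$. Therefore $g$ is the parity of the variables indexed by $S$ when $g(0)=0$, and its negation when $g(0)=1$. (The reverse implication is immediate, since $\bigoplus_{i\in S}(x_i\oplus y_i)=\bigoplus_{i\in S}x_i\oplus\bigoplus_{i\in S}y_i$.)

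For the second statement, suppose for contradiction that $\mathrm{NLBC}(g^\oplus)=1$. Then $M_h$ is a nonzero rank-one matrix over $\mathbb{F}_2$, so $h(x,y)=u(x)v(y)$ for functions $u,v\colon\{0,1\}^n\to\{0,1\}$ that are not identically zero. Pick $x_0$ with $u(x_0)=1$; symmetry of $M_h$ then yields $v(y)=u(x_0)v(y)=h(x_0,y)=h(y,x_0)=u(y)v(x_0)$ for all $y$, and since $v\not\equiv 0$ this forces $v(x_0)=1$ and $v=u$. But then $h(x,x)=u(x)v(x)=u(x)^2=u(x)$ for every $x$, while the zero-diagonal property gives $h(x,x)=0$; hence $u\equiv 0$, a contradiction. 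So $\mathrm{NLBC}(g^\oplus)\neq 1$. Equivalently, one may simply invoke the classical fact that an alternating bilinear form over $\mathbb{F}_2$ has even rank, which in addition shows that $\mathrm{NLBC}(g^\oplus)$ is always even.

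I do not anticipate a genuine obstacle: once the reduction to $\mathrm{rank}_{\mathbb{F}_2}(M_h)$ is in place both parts are short, and the only points needing care are unwinding Theorem~\ref{thm:f2rank} correctly for $f=g^\oplus$ and recalling that a symmetric rank-one matrix over $\mathbb{F}_2$ necessarily has a nonzero diagonal entry. If a more uniform presentation is preferred, replacing the rank-one case analysis by the even-rank theorem for alternating forms handles both statements simultaneously.
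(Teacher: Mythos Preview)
Your proof is correct and follows essentially the same approach as the paper: reduce to $\mathrm{rank}_{\mathbb{F}_2}(M_h)$ via Theorem~\ref{thm:f2rank}, then exploit that $M_h$ is symmetric with zero diagonal. The only difference is cosmetic---the paper asserts directly that a symmetric rank-one matrix factors as $vv^t$ whereas you derive $u=v$ from $uv^t=vu^t$---and your closing remark that alternating forms over $\mathbb{F}_2$ have even rank (hence $\mathrm{NLBC}(g^\oplus)$ is always even) is a nice bonus not in the paper.
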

\begin{proof}
From Theorem~\ref{thm:f2rank}, $\mathrm{NLBC}(g^\oplus)=0$ implies
$g(x\oplus y)\oplus g(x) \oplus g(y) \oplus g(0)=0$.
Hence, it holds $g(x\oplus y)\oplus g(0) = (g(x) \oplus g(0)) \oplus (g(y) \oplus g(0))$, so that $g(z)\oplus g(0)$ is linear, i.e., parity of some variables.
Assume $\mathrm{NLBC}(g^\oplus)=1$. From Theorem~\ref{thm:f2rank}, $\mathrm{rank}_{\mathbb{F}_2}(M_{{g^{\oplus}}'})$ must be equal to 1.
Since $M_{{g^\oplus}'}$ is a symmetric matrix, there is a decomposition $M_{{g^\oplus}'}=vv^t$ where $v$ denotes a $\mathbb{F}_2$-vector of length $2^n$.
On the other hand, the diagonal elements of $M_{{g^\oplus}'}$ must be zero.
That implies $v=0$, and hence $\mathrm{NLBC}(g^\oplus)=0$.
This is a contradiction.
\end{proof}

\begin{example}\label{exm:majn}
The following table shows the nonlocal box complexity of $\mathrm{Maj}_n$ computed numerically by a computer.
\begin{ruledtabular}
\begin{tabular}{ccccccccc}
$n$ & 3 & 5 & 7 & 9 & 11 & 13 & 15 & 17\\ \hline
$\mathrm{NLBC}(\mathrm{Maj}_n^\oplus)$ & 2 & 14 & 26 & 254 & 494 & 1090 & 1818 & 65534\\
\end{tabular}
\end{ruledtabular}
\end{example}
In Example~\ref{exm:majn}, it is not easy to find any rule between $n$ and the nonlocal box complexity although $\mathrm{NLBC}(\mathrm{Maj}_n^\oplus)=2^{n-1}-2$
may happen frequently, e.g., $n=3,5,9,17$.
Generally, it is considered to be difficult to express $\mathrm{rank}_{\mathbb{F}_2}(M_f)$ in a simple form for arbitrary given $f$.
Note that the rank on $\mathbb{R}$ is always at least the rank on $\mathbb{F}_2$.
Since $\mathrm{rank}_{\mathbb{R}}(M_{g^\oplus})$ is equal to the number of nonzero Fourier coefficients of $g$~\cite{10.1109/12.755000},
$2^{n-1}+1$ is an upper bound of $\mathrm{NLBC}(\mathrm{Maj}_n^\oplus)$ for odd $n$ (an inequality $\mathrm{rank}_{\mathbb{F}_2}(M_f)-2\le \mathrm{NLBC}(f)\le \mathrm{rank}_{\mathbb{F}_2}(M_f)$ can be obtained in a similar way as Theorem~\ref{thm:f2rank}).
Here, we introduce a lower bound of the nonlocal box complexity using the one-way communication complexity.

\begin{lemma}\label{lem:xor-oneway}
For any $f\colon\{0,1\}^n\times\{0,1\}^n\to\{0,1\}$,
\begin{equation*}
\mathrm{NLBC}(f) \ge \max\left\{D_\rightarrow^\oplus(f), D_\leftarrow^\oplus(f)\right\}.
\end{equation*}
\end{lemma}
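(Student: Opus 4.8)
The plan is to extract a cheap one-way protocol directly from an optimal $\mathbb{F}_2$-polynomial decomposition. Suppose $\mathrm{NLBC}(f)=t$, so that by the definition of $\mathrm{NLBC}$ there is a representation
\begin{equation*}
f(x,y) = A(x)\oplus B(y)\oplus\bigoplus_{i=1}^{t} l_i(x)\,r_i(y).
\end{equation*}
I would first handle the Alice-to-Bob direction. XORing both sides by $A(x)$ gives $f(x,y)\oplus A(x) = B(y)\oplus\bigoplus_{i=1}^{t} l_i(x)\,r_i(y)$, and the right-hand side factors through the $t$-bit string $s(x):=(l_1(x),\dots,l_t(x))$: setting $h(c,y):=B(y)\oplus\bigoplus_{i=1}^{t} c_i\,r_i(y)$ for $c\in\{0,1\}^t$, we have $f(x,y)\oplus A(x)=h(s(x),y)$. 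By the definition of $D_\rightarrow$ this shows $D_\rightarrow\!\bigl(f(x,y)\oplus A(x)\bigr)\le t$, and since $D_\rightarrow^\oplus(f)$ is the minimum over all additive shifts of the form $A(x)$, we get $D_\rightarrow^\oplus(f)\le t$.

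Symmetrically, XORing the same representation by $B(y)$ and having Bob transmit the $t$ bits $r_1(y),\dots,r_t(y)$ to Alice shows $D_\leftarrow^\oplus(f)\le t$. Combining the two inequalities yields $\mathrm{NLBC}(f)=t\ge\max\{D_\rightarrow^\oplus(f),D_\leftarrow^\oplus(f)\}$, which is the claim.

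I do not expect any real obstacle here; the proof is essentially a one-line unpacking of the definitions, the only point requiring minor care being that $D_\rightarrow^\oplus$ and $D_\leftarrow^\oplus$ already minimise over the shifts $A(x)$ and $B(y)$, so it suffices to use the particular shifts supplied by the chosen decomposition rather than to reason about arbitrary shifts. I would also remark that the resulting bound need not be tight — the one-way communication lower bound is insensitive to the interaction structure of the bilinear part of~\eqref{eq:F2expand} — so its purpose is to complement, not replace, the $\mathbb{F}_2$-rank characterisation of Theorem~\ref{thm:f2rank}.
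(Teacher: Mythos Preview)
Your proposal is correct and follows exactly the same idea as the paper: take an optimal decomposition of the form~\eqref{eq:F2expand} and observe that Bob can compute $f(x,y)\oplus A(x)$ from the $t$ bits $(l_i(x))_{i=1,\dots,t}$, giving $D_\rightarrow^\oplus(f)\le t$ (and symmetrically for $D_\leftarrow^\oplus$). The paper's proof is a one-line sketch of this; your version simply spells out the maps $s$ and $h$ explicitly and handles both directions, so there is no substantive difference.
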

\begin{proof}
Assume $f(x,y)$ has the form~\eqref{eq:F2expand}.
Bob can compute $f\oplus A(x)$ from $(l_i(x))_{i=1,\dotsc,\mathrm{NLBC}(f)}$.
\end{proof}

It obviously holds $D_\rightarrow^\oplus(f)\ge D_\rightarrow(f)-1$.
If $g$ is an odd function, i.e., $g(\overline{z})=\overline{g(z)}$ where $\overline{z}$ denotes the bit inversion of $z$, then
$D_\rightarrow^\oplus(g^\oplus)=D_\rightarrow(g^\oplus)-1$
since $g(x\oplus y)\oplus g(x)= g(\overline{x}\oplus y)\oplus g(\overline{x})$.

\begin{example}\label{exm:majlow}
It obviously holds $D_\rightarrow(\mathrm{Maj}_n^\oplus)=n$.
Since $\mathrm{Maj}_n$ is an odd function, it holds $D_\rightarrow^\oplus(\mathrm{Maj}_n^\oplus)=n-1$.
From Example~\ref{exm:majn}, this lower bound is tight for $n=3$, but becomes looser as $n$ increases.
This lower bound seems not to be asymptotically tight.
\end{example}

In fact, the adaptive protocol introduced in the next section has bias $\delta^{D_\rightarrow^\oplus(g^\oplus)}$ for arbitrary XOR function $g^\oplus$.

\section{Adaptive protocol}\label{sec:adaptive}
\subsection{Paw\l owski et al.'s protocol}
In this section, we show a new adaptive protocol which is inspired by the adaptive protocol invented in~\cite{pawlowski2009information}.
Let the address function $\mathrm{Addr}_n$ be
\begin{equation*}
\mathrm{Addr}_n(x_0,\dotsc,x_{2^n-1},y_1,\dotsc,y_n) := x_y
\end{equation*}
where $y=\sum_{i=1}^{n}y_i2^{i-1}$.
In~\cite{pawlowski2009information}, Paw\l owski et al.\@ characterized the quantum limit $\frac{2+\sqrt{2}}{4}$ of the CHSH probability by using a new principle called information causality.
What they essentially showed in~\cite{pawlowski2009information} is following.

\begin{lemma}\label{lem:addr}
There is a PR-correct protocol computing the address function $\mathrm{Addr}_n$ with bias $\delta^{n}$.
\end{lemma}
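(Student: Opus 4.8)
The plan is to prove Lemma~\ref{lem:addr} by induction on $n$, building the protocol recursively in the spirit of the $2\mapsto 1$ random access code gadget. For $n=1$, Alice holds $x_0,x_1$ and Bob holds $y_1$: Alice feeds $x_0\oplus x_1$ and Bob feeds $y_1$ into one isotropic nonlocal box, obtaining $a,b$ with $a\oplus b=(y_1\wedge(x_0\oplus x_1))\oplus e$ where $e$ is the box error and $\mathbb{E}[(-1)^e]=\delta$; Alice outputs $\alpha=a\oplus x_0$ and Bob outputs $\beta=b$, so $\alpha\oplus\beta=x_{y_1}\oplus e$, which is exact for a PR box and has bias $\delta$ in general.

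For the inductive step I would split Alice's database into halves $x^{(0)}=(x_0,\dots,x_{2^{n-1}-1})$ and $x^{(1)}=(x_{2^{n-1}},\dots,x_{2^n-1})$ and write $y'=(y_1,\dots,y_{n-1})$, keeping the top index bit $y_n$ aside. Run the $(n-1)$-protocol on $(x^{(0)},y')$ and, with a disjoint set of fresh isotropic boxes, the $(n-1)$-protocol on $(x^{(1)},y')$, obtaining $\alpha^{(j)},\beta^{(j)}$ with $\alpha^{(j)}\oplus\beta^{(j)}=x^{(j)}_{y'}\oplus e^{(j)}$ for $j\in\{0,1\}$, where by the induction hypothesis $e^{(0)},e^{(1)}$ are independent and each has bias $\delta^{n-1}$. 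Then insert one more isotropic box: Alice feeds $\alpha^{(0)}\oplus\alpha^{(1)}$, Bob feeds $y_n$, giving $a\oplus b=(y_n\wedge(\alpha^{(0)}\oplus\alpha^{(1)}))\oplus e$ with $e$ fresh of bias $\delta$; Alice outputs $\alpha=\alpha^{(0)}\oplus a$ and Bob outputs $\beta=\beta^{(0)}\oplus b\oplus(y_n\wedge(\beta^{(0)}\oplus\beta^{(1)}))$. Writing $u_j:=\alpha^{(j)}\oplus\beta^{(j)}$ and using distributivity of $\wedge$ over $\oplus$, a one-line computation gives $\alpha\oplus\beta=u_0\oplus(y_n\wedge(u_0\oplus u_1))\oplus e=u_{y_n}\oplus e=x^{(y_n)}_{y'}\oplus e^{(y_n)}\oplus e=\mathrm{Addr}_n(x,y)\oplus e^{(y_n)}\oplus e$.

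The heart of the argument is that the error does not accumulate over all $2^n-1$ boxes (the total count solves $N(n)=2N(n-1)+1$): because the two sub-protocols are \emph{selected between} rather than xored together, only $e^{(y_n)}$ and the top box error $e$ survive. Isotropy (Lemma~\ref{lem:iso}, together with the independence noted in the proof of Lemma~\ref{lem:redundant}) makes $e$ independent of all earlier randomness and of all inputs, so for every fixed input $\mathbb{E}[(-1)^{\alpha\oplus\beta\oplus\mathrm{Addr}_n(x,y)}]=\mathbb{E}[(-1)^{e^{(y_n)}}]\,\mathbb{E}[(-1)^{e}]=\delta^{n-1}\cdot\delta=\delta^n$; PR-correctness is immediate since every $e$ vanishes for PR boxes.

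I expect the main work to be in stating the induction hypothesis strongly enough to justify the last factorization: one must require that for \emph{every} fixed input the $(n-1)$-protocol's error has bias exactly $\delta^{n-1}$ (in particular independent of which half $y_n$ selects and of the database contents), and that this error is a function solely of the errors of the boxes used in that sub-protocol, so that its independence from the freshly introduced top box follows from isotropy. Checking the Boolean selection identity and verifying that Bob legitimately reuses his index $y'$ in both recursive calls while Alice feeds $x^{(0)}$ and $x^{(1)}$ respectively is routine but must be carried out with care.
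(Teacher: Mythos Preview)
Your proof is correct and follows essentially the same inductive construction as the paper: both use the $\mathrm{Addr}_1$ gadget on top of two recursive $\mathrm{Addr}_{n-1}$ calls, with Alice feeding $\alpha^{(0)}\oplus\alpha^{(1)}$ and Bob feeding $y_n$ into a fresh box, and your Bob output $\beta^{(0)}\oplus b\oplus\bigl(y_n\wedge(\beta^{(0)}\oplus\beta^{(1)})\bigr)$ is just an explicit rewriting of the paper's $b'\oplus b_{y_n}$. Your treatment is in fact more careful than the paper's one-line ``at each step the error of bias $\delta$ is XORed'': you spell out why only $e^{(y_n)}\oplus e$ survives and why isotropy plus disjointness of boxes gives the factorization $\delta^{n-1}\cdot\delta$, and you correctly identify that the induction hypothesis must be stated per fixed input.
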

\begin{proof}
The lemma is shown by the induction.
There is an representation
\begin{equation*}
\mathrm{Addr}_1(x_0,x_1,y_1) = x_0 \oplus y_1 (x_0\oplus x_1).
\end{equation*}
Hence, there exists a non-adaptive protocol computing $\mathrm{Addr}_1$ with bias $\delta$, so that the lemma holds for $n=1$.
For $n\ge 2$, there is a recursive formula
\begin{equation*}
\mathrm{Addr}_n(x_0,\dotsc,x_{2^n-1},y_1,\dotsc,y_n)
=
\mathrm{Addr}_1(x'_0, x'_1, y_n)
\end{equation*}
where
\begin{align*}
x'_0 &:= \mathrm{Addr}_{n-1}(x_0,\dotsc,x_{2^{n-1}-1}, y_1,\dotsc,y_{n-1})\\
x'_1 &:= \mathrm{Addr}_{n-1}(x_{2^{n-1}},\dotsc,x_{2^{n}-1}, y_1,\dotsc,y_{n-1}).
\end{align*}
From the hypothesis of the induction, there is a PR-correct protocol computing $x'_0$ and $x'_1$ with bias $\delta^{n-1}$.
Let $a_0$ and $b_0$ ($a_1$ and $b_1$) be random variables corresponding to the outputs of the protocol computing $x'_0$ ($x'_1$), respectively.
Then, if $\delta=1$, one obtains
\begin{align*}
&\mathrm{Addr}_n(x_0,\dotsc,x_{2^n-1},y_1,\dotsc,y_n)\\
&=\mathrm{Addr}_1(a_0\oplus b_0, a_1\oplus b_1, y_n)\\
&=\mathrm{Addr}_1(a_0, a_1, y_n)\oplus \mathrm{Addr}_1(b_0, b_1, y_n)\\
&=a_0 \oplus y_n (a_0\oplus a_1) \oplus b_{y_n}.
\end{align*}
From this observation, we recursively define the protocol for $\mathrm{Addr}_n$ in the following way.
(P1) Compute $a_0$ and $a_1$ at Alice's side, and $b_{y_n}$ at Bob's side using the protocol for $\mathrm{Addr}_{n-1}$.
(P2) Input $a_0\oplus a_1$ and $y_n$ into the common nonlocal box, and obtain $a'$ and $b'$.
(P3) Output $a:=a_0\oplus a'$ at Alice's side and $b:=b'\oplus b_{y_n}$ at Bob's side.
This protocol is obviously PR-correct.
Since at each step, the error of bias $\delta$ is XORed, this protocol has bias $\delta^n$.
\end{proof}

\subsection{The adaptive protocol}\label{subsec:adaptive}
In the following, we show a new adaptive protocol computing arbitrary given function $f\colon\{0,1\}^n\times\{0,1\}^n\to\{0,1\}$ using Paw\l owski et al's protocol.
\begin{theorem}\label{thm:adaptive}
For arbitrary function $f\colon\{0,1\}^n\times\{0,1\}^n\to\{0,1\}$,
there is a PR-correct protocol computing $f$ with bias $\delta^{\min\{D_\rightarrow^\oplus(f), D_\leftarrow^\oplus(f)\}}$.
\end{theorem}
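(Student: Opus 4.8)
\emph{Proof plan.} The plan is to reduce the problem to the address function and invoke Lemma~\ref{lem:addr}. By the symmetry between Alice and Bob, assume without loss of generality that the minimum is attained by $D_\rightarrow^\oplus(f)$, and set $m := D_\rightarrow^\oplus(f)$. By definition of $D_\rightarrow^\oplus$, there is a function $A\colon\{0,1\}^n\to\{0,1\}$ with $D_\rightarrow(g)=m$ for $g(x,y):=f(x,y)\oplus A(x)$; hence $M_g$ has at most $2^m$ distinct rows, i.e.\ there exist $s\colon\{0,1\}^n\to\{0,1\}^m$ and $h\colon\{0,1\}^m\times\{0,1\}^n\to\{0,1\}$ with $g(x,y)=h(s(x),y)$.

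The next step is to recognize $g$ as an instance of $\mathrm{Addr}_m$ in which Alice is the address holder and Bob the data holder. For each $y$, Bob can compute locally the length-$2^m$ string $T_y:=(h(z,y))_{z\in\{0,1\}^m}$, Alice can compute locally the $m$-bit address $s(x)$, and then
\begin{equation*}
g(x,y)=h(s(x),y)=\mathrm{Addr}_m\bigl(T_y,\, s(x)\bigr).
\end{equation*}
The protocol described in the proof of Lemma~\ref{lem:addr} is one in which Alice holds the $2^m$ data bits and Bob the $m$ address bits; relabelling Alice and Bob there yields a PR-correct protocol for $\mathrm{Addr}_m$ with bias $\delta^m$ in which Alice holds the address and Bob holds the data. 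Running it with $s(x)$ as Alice's address bits and $T_y$ as Bob's data bits produces outputs $\tilde a$, $\tilde b$ with $\tilde a\oplus\tilde b=g(x,y)$ with probability $(1+\delta^m)/2$, and with equality whenever the boxes are PR boxes.

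Finally, Alice outputs $a:=\tilde a\oplus A(x)$ (she knows $A(x)$) and Bob outputs $b:=\tilde b$; then $a\oplus b=g(x,y)\oplus A(x)=f(x,y)$ with the same bias $\delta^m=\delta^{D_\rightarrow^\oplus(f)}=\delta^{\min\{D_\rightarrow^\oplus(f),\,D_\leftarrow^\oplus(f)\}}$, and with no error in the PR case. If the minimum is attained by $D_\leftarrow^\oplus(f)$ instead, run the mirror-image argument with the roles of Alice and Bob exchanged (this case in fact uses the protocol of Lemma~\ref{lem:addr} without relabelling). The only point requiring care is the bookkeeping: checking that the $\oplus A(x)$ correction and the Alice/Bob relabelling are legitimate and that the overall bias stays exactly $\delta^m$ rather than degrading. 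This is not really an obstacle, since the multiplicativity of the bias under the recursive composition (one fresh box per level, errors XORed) is precisely what Lemma~\ref{lem:addr} already records; once the reformulation $g(x,y)=\mathrm{Addr}_m(T_y,s(x))$ is in hand, the reduction is immediate.
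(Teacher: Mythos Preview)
Your proposal is correct and follows essentially the same approach as the paper: reduce $f$ (after an additive local correction) to an instance of the address function on $m=D_\rightarrow^\oplus(f)$ address bits and invoke Lemma~\ref{lem:addr}. The only cosmetic difference is that the paper first writes $f(x,y)=\mathrm{Addr}_n\bigl(f(x,\cdot),y\bigr)$ with $n$ address bits and then explains the two compressions (equivalent columns, then the $\oplus B(y)$ correction) to reach $D_\leftarrow^\oplus(f)$ and symmetrically $D_\rightarrow^\oplus(f)$, whereas you go straight to the compressed encoding $g(x,y)=\mathrm{Addr}_m(T_y,s(x))$; the underlying argument and the invocation of Lemma~\ref{lem:addr} are identical.
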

\begin{proof}
Arbitrary function $f$ can be represented by
\begin{align*}
f(x,y) &= \mathrm{Addr}_{n}\bigl(f(x,0,\dotsc,0),f(x,0\dotsc,0,1),\\
&\quad\dotsc,f(x,1,\dotsc,1),y_1,\dotsc,y_n\bigr).
\end{align*}
From Lemma~\ref{lem:addr}, there is an adaptive protocol computing $f$ with bias $\delta^n$.

We can consider compression of Bob's input since we do not have to distinguish $y$'s belong to equivalent columns of $M_f$.
By applying the compression, we obtain the protocol with bias $\delta^{D_\leftarrow(f)}$.
Furthermore, if we have an XOR protocol for $f(x,y)\oplus B(y)$, we also obtain an XOR protocol for $f(x,y)$
by replacing Bob's output $b$ with $b\oplus B(y)$.
Hence, we obtain the protocol with bias $\delta^{D_\leftarrow^\oplus(f)}$.
In the same way, we also obtain the protocol with bias $\delta^{D_\rightarrow^\oplus(f)}$.
\end{proof}

From Lemma~\ref{lem:xor-oneway} and Theorem~\ref{thm:adaptive}, we obtain the following corollary.
\begin{corollary}
The adaptive PR-correct protocol in Theorem~\ref{thm:adaptive} is no worse than any non-adaptive PR-correct protocol.
\end{corollary}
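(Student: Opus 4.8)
The plan is to pin down the best bias attainable by a non-adaptive PR-correct protocol and then compare it term-by-term with the bias of the adaptive protocol of Theorem~\ref{thm:adaptive}. First I would invoke Lemma~\ref{lem:redundant}: for the purpose of minimising the error probability over all non-adaptive PR-correct protocols it suffices to restrict to non-adaptive PR-correct non-redundant protocols. By Lemma~\ref{lem:parity} every such protocol corresponds to an $\mathbb{F}_2$-polynomial representation $f(x,y)=A(x)\oplus B(y)\oplus\bigoplus_{i=1}^{t} l_i(x)r_i(y)$ using $t$ isotropic nonlocal boxes, and, exactly as computed in the proof of Lemma~\ref{lem:redundant}, such a protocol has bias $\delta^{t}$, since the errors $e_i$ are independent with mean $\delta$ and the outputs $a_i$ are uniform. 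By the definition of $\mathrm{NLBC}(f)$ the smallest attainable value of $t$ is $\mathrm{NLBC}(f)$, and a representation~\eqref{eq:F2expand} achieving it yields a non-redundant protocol with bias exactly $\delta^{\mathrm{NLBC}(f)}$.

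Next I would use that for the isotropic nonlocal boxes considered here $\delta=2P_{\mathrm{CHSH}}-1\in[0,1]$ in the regime of interest, so that $t\mapsto\delta^{t}$ is non-increasing. Hence among all non-adaptive PR-correct protocols the largest achievable bias is $\delta^{\mathrm{NLBC}(f)}$, i.e.\ the smallest achievable error probability is $(1-\delta^{\mathrm{NLBC}(f)})/2$; this is exactly what Brassard et al.'s style protocol achieves.

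Finally I would chain the two bounds already available. Lemma~\ref{lem:xor-oneway} gives $\mathrm{NLBC}(f)\ge\max\{D_\rightarrow^\oplus(f),D_\leftarrow^\oplus(f)\}\ge\min\{D_\rightarrow^\oplus(f),D_\leftarrow^\oplus(f)\}$, and monotonicity of $t\mapsto\delta^{t}$ then gives $\delta^{\mathrm{NLBC}(f)}\le\delta^{\min\{D_\rightarrow^\oplus(f),D_\leftarrow^\oplus(f)\}}$. Since Theorem~\ref{thm:adaptive} provides a PR-correct protocol with bias $\delta^{\min\{D_\rightarrow^\oplus(f),D_\leftarrow^\oplus(f)\}}$, this adaptive protocol has bias at least that of any non-adaptive PR-correct protocol, and hence error probability no larger, for every fixed $\delta$. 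I do not expect a genuine obstacle here; the argument is a chaining of Lemmas~\ref{lem:redundant},~\ref{lem:parity},~\ref{lem:xor-oneway} and Theorem~\ref{thm:adaptive}. The only points requiring care are that the comparison must be made at each fixed $\delta$ and that the inequality $\delta^{a}\ge\delta^{b}$ for $a\le b$ uses $\delta\in[0,1]$, which holds for isotropic boxes with $P_{\mathrm{CHSH}}\ge 1/2$ and in particular throughout the bias-amplification regime $P_{\mathrm{CHSH}}>3/4$.
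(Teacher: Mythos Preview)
Your argument is correct and follows the same route as the paper. The paper states the corollary as an immediate consequence of Lemma~\ref{lem:xor-oneway} and Theorem~\ref{thm:adaptive}; your proposal simply makes explicit the preceding reduction (via Lemmas~\ref{lem:redundant} and~\ref{lem:parity}) that the optimal non-adaptive PR-correct bias is $\delta^{\mathrm{NLBC}(f)}$, which the paper had already recorded in the paragraph introducing $\mathrm{NLBC}$.
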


\section{Bias amplification}\label{sec:ba}
We now consider the bias amplification by general XOR function $g^\oplus$ in Brassard et al.'s protocol where $g^\oplus$ is computed by the adaptive PR-correct protocol introduced in Theorem~\ref{thm:adaptive}.
If $z$ is a random variable taking $+1$ with probability $\frac{1+\epsilon}2$ and $-1$ with probability $\frac{1-\epsilon}2$,
its expectation is $\epsilon$.
The expectation $\epsilon$ is called the bias of random variable $z$.
If the inputs for $g$ is independently and identically distributed and have bias $\epsilon$, the bias of output of $g$ is given in the following formula.

\begin{definition}
For any $g\colon\{0,1\}^n\to\{0,1\}$,
we define 
\begin{equation*}
\mathrm{Bias}_\epsilon(g) := 
\sum_{S\subseteq[n]} \widehat{g}(S) \epsilon^{|S|}.
\end{equation*}
\end{definition}

\begin{example}
Since $\mathrm{Maj}_3(z_1,z_2,z_3)=(1/2)(z_1+z_2+z_3-z_1z_2z_3)$, one obtains
$\mathrm{Bias}_\epsilon({\mathrm{Maj}_3}) = (3/2) \epsilon - (1/2) \epsilon^3$.
Roughly speaking, the input bias $\epsilon$ is amplified to $(3/2)\epsilon$ for small $\epsilon$.
\end{example}

When a boolean function $g$ is computed correctly with probability $\frac{1+\rho}2$, the output bias of $g$ is $\rho \mathrm{Bias}_\epsilon(g)$.
We say that the bias is amplified by $g$ if the absolute value of bias of output of $g$ is larger than that of input and if the sign of bias is preserved.
The bias is amplified by the noisy $g$ for sufficiently small input bias iff $\mathrm{Bias}_0(g)=0$ and $\rho \left.\frac{\mathrm{d} \mathrm{Bias}_\epsilon(g)}{\mathrm{d}\epsilon}\right|_{\epsilon=0}>1$.
Hence, we obtain the following theorem.

\begin{theorem}
Assume that $g\colon\{0,1\}^n\to\{0,1\}$
 can be computed correctly with probability $\frac{1+\rho}2$.
Then, the bias is amplified by the noisy $g$ when the input bias is sufficiently small iff
 $\widehat{g}(\varnothing)=0$ and $\rho>\rho_{\mathrm{B}}(g)$ where
\begin{equation*}
\rho_{\mathrm{B}}(g) := \frac1{\max\left\{1,\,\sum_{i=1}^n\widehat{g}(\{i\})\right\}}.
\end{equation*}
\end{theorem}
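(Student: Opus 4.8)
The plan is to reduce the statement to the criterion already recorded just above it—that the noisy $g$ amplifies a sufficiently small input bias if and only if $\mathrm{Bias}_0(g)=0$ and $\rho\,\frac{\mathrm d}{\mathrm d\epsilon}\mathrm{Bias}_\epsilon(g)\big|_{\epsilon=0}>1$—and then to evaluate these two quantities through the Fourier expansion of $g$. First I would justify the criterion. For i.i.d.\ inputs of bias $\epsilon$ and a device computing $g$ correctly with probability $\frac{1+\rho}{2}$, the output bias is $\rho\,\mathrm{Bias}_\epsilon(g)$, and $\mathrm{Bias}_\epsilon(g)=\sum_{S\subseteq[n]}\widehat g(S)\epsilon^{|S|}$ is a polynomial in $\epsilon$; so I would Taylor-expand it at $\epsilon=0$ as $\mathrm{Bias}_0(g)+c_1\epsilon+O(\epsilon^2)$ with $c_1:=\frac{\mathrm d}{\mathrm d\epsilon}\mathrm{Bias}_\epsilon(g)\big|_{\epsilon=0}$. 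Amplification requires that for all sufficiently small $\epsilon\neq0$ of either sign one has $|\rho\,\mathrm{Bias}_\epsilon(g)|>|\epsilon|$ with the sign of $\epsilon$ preserved. Since $\mathrm{Bias}_\epsilon(g)\to\mathrm{Bias}_0(g)$ as $\epsilon\to0$, if $\rho\,\mathrm{Bias}_0(g)\neq0$ then the output bias approaches a nonzero limit of fixed sign while $\epsilon$ changes sign, violating sign preservation; as amplification also forces $\rho\neq0$ (for $\rho=0$ the output bias is identically zero), this yields $\mathrm{Bias}_0(g)=0$. Granted that, the output bias is $\rho c_1\epsilon+O(\epsilon^2)$, and a short comparison of the cases $\epsilon\to0^{+}$ and $\epsilon\to0^{-}$ shows that it exceeds $|\epsilon|$ in absolute value with matching sign for all small $\epsilon$ precisely when $\rho c_1>1$ (the boundary value $\rho c_1=1$ fails for one of the two signs). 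The converse is immediate from the same expansion.

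Next I would compute $\mathrm{Bias}_0(g)$ and $c_1$ from the definition. Putting $\epsilon=0$ in $\sum_S\widehat g(S)\epsilon^{|S|}$ leaves only the $S=\varnothing$ term, so $\mathrm{Bias}_0(g)=\widehat g(\varnothing)$; differentiating term by term gives $\sum_S\widehat g(S)|S|\epsilon^{|S|-1}$, and setting $\epsilon=0$ kills every term with $|S|\neq1$, so $c_1=\sum_{i=1}^n\widehat g(\{i\})$. Hence the criterion reads: $\widehat g(\varnothing)=0$ and $\rho\sum_{i=1}^n\widehat g(\{i\})>1$.

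It remains to rewrite $\rho\sum_i\widehat g(\{i\})>1$ as $\rho>\rho_{\mathrm B}(g)$, using $0\le\rho\le1$. If $\sum_i\widehat g(\{i\})\le1$ then $\rho\sum_i\widehat g(\{i\})\le\rho\le1$, so the inequality never holds, matching the impossibility of $\rho>\rho_{\mathrm B}(g)=1$. If $\sum_i\widehat g(\{i\})>1$ then dividing gives $\rho\sum_i\widehat g(\{i\})>1\iff\rho>1/\sum_i\widehat g(\{i\})=\rho_{\mathrm B}(g)$. In both cases $\rho\sum_i\widehat g(\{i\})>1\iff\rho>\rho_{\mathrm B}(g)$, which together with $\widehat g(\varnothing)=0$ is the assertion of the theorem.

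The step I expect to be the main obstacle is not the computation but stating the sign-preservation requirement precisely and squeezing the degenerate cases out of it: one must argue why a nonzero $\widehat g(\varnothing)$ truly blocks amplification (this is exactly what forces the ``unbiasedness'' condition), why the boundary $\rho\sum_i\widehat g(\{i\})=1$ is excluded, and why the regime $\sum_i\widehat g(\{i\})\le1$ together with the endpoint $\rho=1$ makes amplification impossible—so that the $\max\{1,\cdot\}$ in $\rho_{\mathrm B}(g)$ emerges from the analysis rather than being imposed. Once the criterion is pinned down, the Fourier evaluation and the case split are routine.
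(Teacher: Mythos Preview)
Your proposal is correct and follows exactly the route the paper takes: the paper records the criterion $\mathrm{Bias}_0(g)=0$ and $\rho\,\frac{\mathrm d}{\mathrm d\epsilon}\mathrm{Bias}_\epsilon(g)\big|_{\epsilon=0}>1$ in the sentence immediately preceding the theorem and then simply writes ``Hence, we obtain the following theorem,'' leaving the Fourier evaluation $\mathrm{Bias}_0(g)=\widehat g(\varnothing)$, $c_1=\sum_i\widehat g(\{i\})$ and the recasting via $\rho_{\mathrm B}(g)$ implicit. You spell out precisely those implicit steps---the Taylor expansion, the sign-preservation argument forcing $\widehat g(\varnothing)=0$, and the case split on $\sum_i\widehat g(\{i\})\lessgtr 1$ that produces the $\max\{1,\cdot\}$---so your write-up is strictly more detailed than the paper's own treatment, but the approach is identical.
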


The majority functions minimize $\rho_\mathrm{B}(g)$.
\begin{lemma}\label{lem:majmax}
For $g\colon\{0,1\}^n\to\{0,1\}$,
\begin{align*}
\rho_{\mathrm{B}}(g) &\ge \frac{2^{n-1}}{n\binom{n-1}{\frac{n-1}2}},&&\text{if $n$ is odd}\\
\rho_{\mathrm{B}}(g) &\ge \frac{2^{n}}{n\binom{n}{\frac{n}2}},&&\text{if $n$ is even.}
\end{align*}
The equality is achieved by and only by the majority functions on $n$ variables.
Asymptotically, it holds $\rho_{\mathrm{B}}(g) \ge \sqrt{\pi/(2n)}(1+O(n^{-1/2}))$.
\end{lemma}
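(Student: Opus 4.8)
The plan is to recast the bound as a pointwise optimization over the truth table of $g$. Viewing $g$ as a $\{+1,-1\}$-valued function on $\{+1,-1\}^n$, the first step is the identity
\begin{equation*}
\sum_{i=1}^n \widehat{g}(\{i\}) \;=\; \mathbb{E}_{x}\Bigl[\,g(x)\sum_{i=1}^n x_i\,\Bigr],
\end{equation*}
with $x$ uniform on $\{+1,-1\}^n$, which is immediate from $\widehat{g}(\{i\})=\mathbb{E}_x[g(x)x_i]$. Because $g(x)\in\{+1,-1\}$ at each point, this expectation is maximized exactly by the functions with $g(x)=\mathrm{sign}\bigl(\sum_i x_i\bigr)$ on every $x$ with $\sum_i x_i\ne 0$, the values of $g$ on the tie set $T=\{x:\sum_i x_i=0\}$ being immaterial. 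For odd $n$ one has $T=\varnothing$, so the maximizer is the unique function $\mathrm{Maj}_n$; for even $n$, $|T|=\binom{n}{n/2}$ and the maximizers are precisely the $2^{\binom{n}{n/2}}$ functions in $\mathrm{Maj}_n$. Since (as the explicit evaluation below shows) $\mathbb{E}_x\bigl|\sum_i x_i\bigr|\ge 1$, we get $\max\{1,\sum_i\widehat{g}(\{i\})\}\le \mathbb{E}_x\bigl|\sum_i x_i\bigr|$ and hence $\rho_{\mathrm{B}}(g)\ge 1/\mathbb{E}_x\bigl|\sum_i x_i\bigr|$, with equality iff $g$ is a majority function.

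It remains to evaluate $\mathbb{E}_x\bigl|\sum_{i=1}^n x_i\bigr| = 2^{-n}\sum_{k=0}^n \binom{n}{k}\,|2k-n|$. I would split the sum at $k=n/2$, apply $k\binom{n}{k}=n\binom{n-1}{k-1}$ to turn each half into a partial sum over a binomial row, and then use the symmetry of Pascal's triangle to collapse those partial sums. This gives $\sum_k\binom{n}{k}|2k-n| = 2n\binom{n-1}{(n-1)/2}$ for odd $n$ and $n\binom{n}{n/2}$ for even $n$; dividing by $2^n$ and taking reciprocals reproduces exactly the two claimed bounds $2^{n-1}/\bigl(n\binom{n-1}{(n-1)/2}\bigr)$ and $2^{n}/\bigl(n\binom{n}{n/2}\bigr)$. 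Two sanity checks: the odd-$n$ value is consistent with the standard fact $\widehat{\mathrm{Maj}_n}(\{i\})=\binom{n-1}{(n-1)/2}/2^{n-1}$ (which $\mathrm{Maj}_n$ must satisfy, since it attains the maximum and its level-one coefficients are equal by symmetry), and for even $n$ the identity $\binom{n}{n/2}=2\binom{n-1}{n/2-1}$ shows the bound can equivalently be written $2^{n-1}/\bigl(n\binom{n-1}{n/2-1}\bigr)$. The asymptotic estimate $\rho_{\mathrm{B}}(g)\ge\sqrt{\pi/(2n)}\,(1+O(n^{-1/2}))$ then follows from Stirling's formula applied to the central binomial coefficient.

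The argument is essentially routine once the identity for the degree-one Fourier mass is in place; the one spot that needs genuine care is the closed-form evaluation of $\sum_k\binom{n}{k}|2k-n|$ and matching it against the odd and even cases (including getting the off-by-one in the binomial index right), so that is where I expect the main, albeit modest, obstacle to lie. One small caveat worth flagging: for $n\le 2$ one has $\mathbb{E}_x|\sum_i x_i|=1$, so every $g$ meets the bound with equality and the uniqueness statement should be understood as applying for $n\ge 3$; for all larger $n$ the strict inequality $\mathbb{E}_x|\sum_i x_i|>1$ forces the maximizing $g$ to be a majority function, yielding the ``only by the majority functions'' conclusion verbatim.
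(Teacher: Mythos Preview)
Your proposal is correct and follows exactly the paper's approach: the paper's proof likewise writes $\sum_{i}\widehat{g}(\{i\})=\mathbb{E}[g(x)(x_1+\dotsb+x_n)]\le\mathbb{E}[|x_1+\dotsb+x_n|]$ with equality iff $g\in\mathrm{Maj}_n$, then defers the evaluation of the expectation and the Stirling asymptotic to~\cite{odonnell2014analysis}. Your write-up simply fills in those deferred computations explicitly (and your flag about the $n\le 2$ edge case in the uniqueness claim is a fair observation the paper does not address).
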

\begin{proof}
One obtains $\sum_{i\in[n]} \widehat{g}(\{i\})=\mathbb{E}[g(x)(x_1+\dotsb+x_n)]\le\mathbb{E}[|x_1+\dotsb+x_n|]$
where the equality holds only when $g$ is $\mathrm{Maj}_n$~\cite{odonnell2014analysis}.
Hence, only the majority functions $\mathrm{Maj}_n$ maximize $\sum_{i\in[n]}\widehat{g}(\{i\})$.
It is easy to complete the rest of the proof~\cite{odonnell2014analysis}.
\end{proof}
Note that the lower bound for even $n$ is equal to the lower bound for $n-1$.
The condition on $\delta$ for the bias amplification by Brassard et al.'s protocol is $\delta^{D_{\rightarrow}^\oplus(g^\oplus)}> \rho_{\mathrm{B}}(g)$.

\begin{definition}
For any $g\colon \{0,1\}^n\to\{0,1\}$,
\begin{equation*}
\delta_\mathrm{B}(g) :=
\begin{cases}
\rho_{\mathrm{B}}(g)^{\frac1{D_\rightarrow^\oplus(g^\oplus)}},& \text{if $\widehat{g}(\varnothing)=0$ and $\rho_{\mathrm{B}}(g)<1$}\\
1,&\text{otherwise.}
\end{cases}
\end{equation*}
\end{definition}

If $\delta > \delta_{\mathrm{B}}(g)$ for some $g\colon\{0,1\}^n\to\{0,1\}$, there exists an XOR protocol with constant bias.

\begin{example}\label{exm:maj3brassard}
One obtains $\delta_\mathrm{B}(\mathrm{Maj}_3) = \sqrt{2/3}$
that means that the threshold for the CHSH probability is $\frac{1+\sqrt{2/3}}2 = \frac{3+\sqrt{6}}6$~\cite{PhysRevLett.96.250401}.
\end{example}

We can now rephrase Theorem~\ref{thm:main} in the following form.
\begin{theorem}\label{thm:main1}
\begin{equation*}
\inf_{g\colon \{0,1\}^n\to\{0,1\}, n\in\mathbb{N}} \delta_{\mathrm{B}}(g) = \sqrt{\frac23}.
\end{equation*}
Furthermore, $\delta_{\mathrm{B}}(g) = \sqrt{2/3}$ iff $g$ is essentially equivalent to $\mathrm{Maj}_3$.
\end{theorem}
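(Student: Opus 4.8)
The plan is to lower-bound $\delta_{\mathrm{B}}(g)$ by $\sqrt{2/3}$ for every $g\colon\{0,1\}^n\to\{0,1\}$ with $\widehat{g}(\varnothing)=0$ and $\rho_{\mathrm B}(g)<1$ (other $g$ give $\delta_{\mathrm B}(g)=1$ and are irrelevant), and then to analyze the equality case. Since $\delta_{\mathrm B}(g)=\rho_{\mathrm B}(g)^{1/D_\rightarrow^\oplus(g^\oplus)}$ and $\rho_{\mathrm B}(g)<1$, the inequality $\delta_{\mathrm B}(g)\ge\sqrt{2/3}$ is equivalent to
\begin{equation*}
\rho_{\mathrm B}(g)\ \ge\ (2/3)^{D_\rightarrow^\oplus(g^\oplus)/2}.
\end{equation*}
So the whole problem reduces to a quantitative trade-off between two quantities: how large the degree-one Fourier mass $\sum_i\widehat{g}(\{i\})$ can be (which controls $\rho_{\mathrm B}$ from below, via Lemma~\ref{lem:majmax}), and how small the one-way complexity $D_\rightarrow^\oplus(g^\oplus)$ can be. First I would record the two endpoints: when $D_\rightarrow^\oplus(g^\oplus)=1$, Lemma~\ref{lem:nlbc1} together with Lemma~\ref{lem:xor-oneway} forces $g$ to be a parity or its negation, for which $\widehat{g}(\varnothing)=0$ fails or $\rho_{\mathrm B}=1$; so the first nontrivial case is $D_\rightarrow^\oplus(g^\oplus)=2$, and there one must show $\rho_{\mathrm B}(g)\ge 2/3$ with equality only for $\mathrm{Maj}_3$ (up to the obvious symmetries). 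For larger $D:=D_\rightarrow^\oplus(g^\oplus)$ one needs $\rho_{\mathrm B}(g)\ge(2/3)^{D/2}$, i.e. $\sum_i\widehat{g}(\{i\})\le(3/2)^{D/2}$.

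The key structural input is a bound of the form: if $D_\rightarrow^\oplus(g^\oplus)\le D$ then the degree-one Fourier mass $\sum_i\widehat{g}(\{i\})$ is at most something like $(3/2)^{D/2}$, or more plausibly that the number of ``effective'' variables is bounded in terms of $D$. Concretely, I expect that $D_\rightarrow^\oplus(g^\oplus)$ small forces $g$ to depend on few variables: by Lemma~\ref{lem:xor-oneway} and Theorem~\ref{thm:f2rank}, $D\ge D_\rightarrow^\oplus(g^\oplus)\ge D_\rightarrow(g^\oplus)-1$, and $D_\rightarrow(g^\oplus)$ is related to the $\mathbb F_2$-rank/number of distinct rows of the XOR-matrix. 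A clean route: show $g^\oplus(x,y)\oplus A(x)$ has at most $2^D$ distinct rows, hence the function $g$ restricted appropriately has a ``junta-like'' structure, so that $\sum_i\widehat{g}(\{i\})$ is bounded by $\mathbb{E}|x_1+\dots+x_m|$ for some $m$ depending on $D$; combining with the exact majority bound $\mathbb E|x_1+\dots+x_m|=m\binom{m-1}{(m-1)/2}/2^{m-1}\sim\sqrt{2m/\pi}$, and checking that $m$ cannot grow faster than roughly $(3/2)^{D}$, gives the required $(3/2)^{D/2}$. I would then verify that the bound is slack for all $D\ge 3$ (the left side grows at most polynomially in $m$ while the required bound is exponential in $D$ and $m$ is at most exponential in $D$ with a controlled base), so equality in $\delta_{\mathrm B}(g)=\sqrt{2/3}$ can only occur at $D=2$.

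For the equality/uniqueness part at $D=2$: a protocol with $D_\rightarrow^\oplus(g^\oplus)\le 2$ means, after subtracting $A(x)$, Bob can recover $g^\oplus(x,y)$ from two bits $l_1(x),l_2(x)$; equivalently $g(x\oplus y)=A(x)\oplus B(y)\oplus l_1(x)r_1(y)\oplus l_2(x)r_2(y)$ by Lemma~\ref{lem:parity}/Theorem~\ref{thm:f2rank}, i.e. $\mathrm{NLBC}(g^\oplus)\le 2$, and by Lemma~\ref{lem:nlbc1} it is exactly $2$. Among such $g$ (up to negation, permutation of coordinates, and XOR of a parity into the argument — the ``essentially equivalent'' relation) I would classify all XOR functions with $\mathbb F_2$-polynomial degree $2$ and $\mathbb F_2$-rank of $M_{g'}$ equal to $2$, then maximize $\sum_i\widehat g(\{i\})$ subject to $\widehat g(\varnothing)=0$; a short calculation should show the max is $3/2$, attained only by $\mathrm{Maj}_3$ (the only $3$-variable function with $\widehat g(\varnothing)=0$, quadratic $\mathbb F_2$-form of rank $2$ in the symmetrized sense, and degree-one mass $3/2$). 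Finally, I would exhibit $\mathrm{Maj}_3$ itself: $\rho_{\mathrm B}(\mathrm{Maj}_3)=2/3$ and $D_\rightarrow^\oplus(\mathrm{Maj}_3^\oplus)=2$ (Examples~\ref{exm:majlow},~\ref{exm:maj3brassard}), so $\delta_{\mathrm B}(\mathrm{Maj}_3)=\sqrt{2/3}$, proving the infimum is attained and equals $\sqrt{2/3}$.

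The main obstacle I anticipate is the trade-off bound in the second paragraph: controlling $\sum_i\widehat{g}(\{i\})$ in terms of $D_\rightarrow^\oplus(g^\oplus)$ uniformly in $n$. The degree-one mass can in principle be as large as $\sqrt{n}$-ish for majority on $n$ bits, so one genuinely needs that small one-way XOR complexity caps the number of relevant variables — and the cap must be good enough that $(3/2)^{D/2}$ beats $\mathbb E|x_1+\dots+x_m|$ for the worst admissible $m=m(D)$. Getting the exponential base exactly right (so that $D=2$ is the unique tight case and $D\ge 3$ is strictly slack) is the delicate point; I would handle $D=2$ by the explicit classification above and $D\ge 3$ by a crude-but-sufficient counting bound on distinct rows of the XOR matrix combined with the asymptotics in Lemma~\ref{lem:majmax}.
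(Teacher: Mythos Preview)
Your overall strategy is right---bound $\sum_i\widehat g(\{i\})$ in terms of $D:=D_\rightarrow^\oplus(g^\oplus)$, handle large $D$ by a general inequality, and finish small $D$ by reducing to functions on few variables---but the central step is mis-specified and would not close.

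The gap is in your second paragraph. You aim for a junta bound ``$m\lesssim(3/2)^D$'' obtained from ``$2^D$ distinct rows''; this has the relation backwards. For an XOR function, the number of distinct rows of $M_{g^\oplus}$ is $2^{\dim(\widehat g)}$, and Lemma~\ref{lem:fdim} gives $D_\rightarrow(g^\oplus)=\dim(\widehat g)$; together with $D_\rightarrow^\oplus\ge D_\rightarrow-1$ this yields the \emph{linear} bound $\dim(\widehat g)\le D+1$, not an exponential one. The paper then uses the elementary observation that the singletons $e_i$ with $\widehat g(\{i\})\ne 0$ are linearly independent in $\mathbb F_2^n$, so $|\{i:\widehat g(\{i\})\ne0\}|\le\dim(\widehat g)$, and Cauchy--Schwarz plus Parseval gives $\sum_i\widehat g(\{i\})\le\sqrt{\dim(\widehat g)}$. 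This is Lemma~\ref{lem:sum}, and it yields $\delta_{\mathrm B}(g)\ge\dim(\widehat g)^{-1/(2(\dim(\widehat g)-1))}>\sqrt{2/3}$ once $\dim(\widehat g)\ge 5$. Your route via the majority estimate $\mathbb E|x_1+\cdots+x_m|$ with an exponential $m$ would give a bound that \emph{diverges} with $D$ and cannot beat $(3/2)^{D/2}$; the Cauchy--Schwarz bound with the correct linear $m$ is what makes the large-$D$ regime work.

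For the remaining cases $\dim(\widehat g)\le 4$ the paper does a short case split on $k:=|\{i:\widehat g(\{i\})\ne0\}|$. When $k=\dim(\widehat g)$ the singletons span the Fourier support, so $g$ literally depends on at most four coordinates and Lemma~\ref{lem:exh} finishes. When $k<\dim(\widehat g)$ the Cauchy--Schwarz bound $\sum_i\widehat g(\{i\})\le\sqrt k$ is already strong enough to force $D\le 2$ (for $k\le3$) or $D\le1$ (for $k\le2$), again reducing to Lemma~\ref{lem:exh} or Lemma~\ref{lem:nlbc1}. Your plan to ``classify rank-$2$ $\mathbb F_2$-polynomials'' at $D=2$ is workable but heavier; the paper instead observes that $D\le2$ forces $\dim(\widehat g)\le3$, hence $g$ is a genuine $3$-variable function, and then invokes Lemma~\ref{lem:exh}. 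The piece you are missing throughout is Lemma~\ref{lem:fdim}: once you have $\dim(\widehat g)=D_\rightarrow(g^\oplus)$ and the singleton count bounded by $\dim(\widehat g)$, the whole argument becomes a few lines rather than an asymptotic trade-off.
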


Here, we say that $g$ is essentially equivalent to $\mathrm{Maj}_3$ if $g$ is the majority of some fixed three input variables and ignores the other $n-3$ input variables.
The following lemma was shown in~\cite{montanaro2009ccxor}.
\begin{lemma}\label{lem:fdim}
For any $g\colon\{0,1\}^n\to\{0,1\}$,
\begin{equation*}
D_\rightarrow(g^\oplus)=\dim(\widehat{g}).
\end{equation*}
\end{lemma}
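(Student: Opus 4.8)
The plan is to compute $D_\rightarrow(g^\oplus)$ through the number of distinct rows of $M_{g^\oplus}$, using the characterization $D_\rightarrow(f)=\lceil\log_2\mathrm{nrows}(M_f)\rceil$ recalled in Section~\ref{sec:pre}, and then to identify that count with $2^{\dim(\widehat g)}$ by Fourier analysis. First I would describe the rows of $M_{g^\oplus}$: the row indexed by $x\in\{0,1\}^n$ is the vector $\bigl(g(x\oplus y)\bigr)_y$, which is a translate of $g$. Substituting $w=x\oplus y$, rows $x$ and $x'$ coincide iff $g(w)=g(w\oplus(x\oplus x'))$ for all $w$, i.e.\ iff $x\oplus x'$ lies in the subgroup $H:=\{z\in\{0,1\}^n\mid g(w\oplus z)=g(w)\ \forall w\}$ of periods of $g$. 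Hence the distinct rows of $M_{g^\oplus}$ are in bijection with the cosets of $H$ in $\{0,1\}^n$, so $\mathrm{nrows}(M_{g^\oplus})=2^n/|H|$.

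Next I would pin down $H$ Fourier-analytically. Translating $g$ by $z$ replaces each Fourier coefficient $\widehat g(S)$ by $(-1)^{\langle 1_S,z\rangle}\widehat g(S)$, which is immediate from the monomial expansion of $g$ once indices are read in $\mathbb F_2$; since the Fourier representation is unique, $z\in H$ iff $\langle 1_S,z\rangle=0$ for every $S\in\mathrm{supp}(\widehat g)$. Thus $H$ is exactly the $\mathbb F_2$-orthogonal complement of the linear span $V:=\mathrm{span}_{\mathbb F_2}\{1_S\mid S\in\mathrm{supp}(\widehat g)\}$, whence $\dim_{\mathbb F_2}H=n-\dim V=n-\dim(\widehat g)$. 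Combining the two computations, $\mathrm{nrows}(M_{g^\oplus})=2^n/2^{\,n-\dim(\widehat g)}=2^{\dim(\widehat g)}$, and therefore $D_\rightarrow(g^\oplus)=\lceil\log_2 2^{\dim(\widehat g)}\rceil=\dim(\widehat g)$.

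I do not expect a serious obstacle here; the only point that requires care is the bookkeeping for the translation action on Fourier coefficients and the passage between the $\{0,1\}$ and $\{+1,-1\}$ representations — in particular, using that two boolean functions are equal precisely when all their Fourier coefficients agree, which is what lets us convert "$z$ is a period of $g$'' into the linear-algebraic condition $\langle 1_S,z\rangle=0$ on $\mathrm{supp}(\widehat g)$. As a byproduct, the same argument applied to columns gives $D_\leftarrow(g^\oplus)=\dim(\widehat g)$ as well, which is worth stating since it is used implicitly when we invoke $D_\leftarrow^\oplus$ for XOR functions.
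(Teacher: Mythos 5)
Your proof is correct. The paper itself offers no proof of this lemma---it is cited from Montanaro and Osborne---and your argument, counting the distinct rows of $M_{g^\oplus}$ via the period subgroup $H=\{z\mid g(\cdot\oplus z)=g\}$ and then identifying $H$ as the $\mathbb{F}_2$-orthogonal complement of $\mathrm{span}_{\mathbb{F}_2}\{1_S\mid S\in\mathrm{supp}(\widehat g)\}$ so that $\mathrm{nrows}(M_{g^\oplus})=2^{n}/|H|=2^{\dim(\widehat g)}$, is precisely the standard argument from that reference.
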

Since $D_\rightarrow^\oplus(f)\ge D_\rightarrow(f)-1$, it holds $D_\rightarrow^\oplus(g^\oplus)\ge \dim(\widehat{g})-1$.
\begin{remark}
If $A(x)$ in the definition of $D_\rightarrow^\oplus(f)$ is restricted to be linear, $D_\rightarrow^\oplus(g^\oplus)$ is equal to the affine dimension of $\widehat{g}$, which is
the minimum dimension of affine space on $\mathbb{F}_2$ including $\{1_S \mid S\in\mathrm{supp}(\widehat{g})\}$.
Hence, the affine dimension of $\widehat{g}$ is an upper bound of $D_\rightarrow^\oplus(g^\oplus)$.
\end{remark}
First, we show that Theorem~\ref{thm:main1} holds for $n\le 4$.
\begin{lemma}\label{lem:exh}
It holds $\delta_{\mathrm{B}}(g) \ge \sqrt{2/3}$ for all boolean functions $g$ on at most 4 variables.
Furthermore, for $n\le 4$, only functions essentially equivalent to $\mathrm{Maj}_3$ satisfy $\delta_{\mathrm{B}}(g)=\sqrt{2/3}$.
\end{lemma}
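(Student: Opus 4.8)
The plan is to use the Fourier dimension to collapse the infinitely many functions on at most four variables into a few structural cases, obtaining in each case either a strict inequality $\delta_{\mathrm{B}}(g)>\sqrt{2/3}$ or an explicit description of the equality case via Lemma~\ref{lem:majmax}. First I would dispose of the degenerate functions: if $\widehat{g}(\varnothing)\ne 0$ or $\rho_{\mathrm{B}}(g)\ge 1$ then $\delta_{\mathrm{B}}(g)=1$ by definition, so such $g$ satisfy the inequality strictly and are plainly not essentially equivalent to $\mathrm{Maj}_3$. So assume $\widehat{g}(\varnothing)=0$ and $\rho_{\mathrm{B}}(g)<1$, i.e.\ $\sum_i\widehat{g}(\{i\})>1$. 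Since $\sum_i\widehat{g}(\{i\})^2\le 1$ by Parseval, $g$ has at least two nonzero degree-$1$ Fourier coefficients; a short case check shows that when $\dim(\widehat{g})\le 2$ — that is, when $g$ is a function of at most two parities — the constraint $\widehat{g}(\varnothing)=0$ always forces $\rho_{\mathrm{B}}(g)=1$, a contradiction. Hence $\dim(\widehat{g})\ge 3$, and since $D_\rightarrow^\oplus(g^\oplus)\ge\dim(\widehat{g})-1$ (Lemma~\ref{lem:fdim} together with $D_\rightarrow^\oplus(f)\ge D_\rightarrow(f)-1$), we get $D_\rightarrow^\oplus(g^\oplus)\ge 2$; in particular the dangerous regime $D_\rightarrow^\oplus(g^\oplus)\le 1$ never occurs, and $n\le 2$ is ruled out entirely.

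Next I would use the fact that for both $n=3$ (odd) and $n=4$ (even) the lower bound in Lemma~\ref{lem:majmax} equals $2/3$, so $\rho_{\mathrm{B}}(g)\ge 2/3$ with no further hypotheses, and therefore
\begin{equation*}
\delta_{\mathrm{B}}(g)=\rho_{\mathrm{B}}(g)^{1/D_\rightarrow^\oplus(g^\oplus)}\ge (2/3)^{1/D_\rightarrow^\oplus(g^\oplus)}.
\end{equation*}
If $D_\rightarrow^\oplus(g^\oplus)\ge 3$, this is at least $(2/3)^{1/3}>(2/3)^{1/2}=\sqrt{2/3}$, a strict inequality with no equality case. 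The only remaining possibility is $D_\rightarrow^\oplus(g^\oplus)=2$, which by $D_\rightarrow^\oplus(g^\oplus)\ge\dim(\widehat{g})-1$ forces $\dim(\widehat{g})=3$; then $\delta_{\mathrm{B}}(g)=\sqrt{\rho_{\mathrm{B}}(g)}\ge\sqrt{2/3}$, with equality iff $\rho_{\mathrm{B}}(g)=2/3$, i.e.\ iff $\sum_i\widehat{g}(\{i\})=3/2$.

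Finally I would characterize this equality. For $n=3$ it is immediate from Lemma~\ref{lem:majmax}, the equality case of which gives $\sum_i\widehat{g}(\{i\})=3/2$ exactly for $g=\mathrm{Maj}_3$. For $n=4$, write $g$ as a function of three $\mathbb{F}_2$-independent parities, with supports $A_1,A_2,A_3$; then $\mathrm{supp}(\widehat{g})$ lies in the $3$-dimensional subspace $W:=\langle 1_{A_1},1_{A_2},1_{A_3}\rangle\subseteq\mathbb{F}_2^4$, so the nonzero degree-$1$ coefficients of $g$ occur only at the weight-one vectors of $W$, of which there are at most three. If there are at most two, Cauchy--Schwarz gives $\sum_i\widehat{g}(\{i\})\le\sqrt 2<3/2$, a contradiction; so there are exactly three, which forces $W$ to be the coordinate subspace on three of the four positions. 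Then $\mathrm{supp}(\widehat{g})$ avoids the fourth position, so $g$ ignores one variable and is a genuine $3$-variable function, and the $n=3$ analysis yields $g=\mathrm{Maj}_3$ on those three variables, i.e.\ $g$ is essentially equivalent to $\mathrm{Maj}_3$.

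I expect the main obstacle to be the bookkeeping in the two structural steps — verifying $\dim(\widehat{g})\le 2\Rightarrow\rho_{\mathrm{B}}(g)=1$ under $\widehat{g}(\varnothing)=0$, and counting the weight-one vectors of a $3$-dimensional subspace of $\mathbb{F}_2^4$ — both elementary but requiring care to enumerate correctly. A completely mechanical alternative is to enumerate all $2^{16}$ Boolean functions on four variables by computer and evaluate $\widehat{g}(\varnothing)$, $\rho_{\mathrm{B}}(g)$, $D_\rightarrow^\oplus(g^\oplus)$ and $\delta_{\mathrm{B}}(g)$ directly, which independently confirms both the bound and its equality case.
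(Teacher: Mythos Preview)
Your proof is correct and follows essentially the same route as the paper's: Lemma~\ref{lem:majmax} gives $\rho_{\mathrm{B}}(g)\ge 2/3$ for $n\le 4$, Lemma~\ref{lem:fdim} gives $D_\rightarrow^\oplus(g^\oplus)\ge\dim(\widehat{g})-1$, and Cauchy--Schwarz on the degree-$1$ Fourier coefficients pins down the equality case. The only minor variation is that you dispose of the regime $D_\rightarrow^\oplus(g^\oplus)\le 1$ via a direct Fourier-dimension argument (a balanced function of at most two parities is itself a single parity), whereas the paper invokes Lemma~\ref{lem:nlbc1} for the same conclusion.
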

\begin{proof}
Assume $D_\rightarrow^\oplus(g^\oplus)\le 1$.
Then, the protocol is non-adaptive.
From Lemma~\ref{lem:nlbc1}, $g$ must be linear, and hence, $\rho_\mathrm{B}(g)=1$.
Assume $D_\rightarrow^\oplus(g^\oplus)\ge 2$.
From Lemma~\ref{lem:majmax}, $\rho_\mathrm{B}(g)\ge 2/3$ for $n\le 4$, and hence, $\delta_{\mathrm{B}}(g)\ge\sqrt{2/3}$.
From Example~\ref{exm:maj3brassard}, it is achieved by $\mathrm{Maj}_3$.

Next, we show the uniqueness.
From the above argument, it holds $\delta_\mathrm{B}(g)=\sqrt{2/3}$ only when
$\rho_\mathrm{B}(g)=2/3$ and $D_\rightarrow^\oplus(g^\oplus)=2$.
From Lemma~\ref{lem:majmax}, $\rho_{\mathrm{B}}(g)=2/3$ only when $g$ is one of the 64 majority functions on 4 variables.
In the following, we show that for $g\in\mathrm{Maj}_4$, $D_\rightarrow^\oplus(g^\oplus)=2$ only when $g$ is essentially equivalent to $\mathrm{Maj}_3$.
From Lemma~\ref{lem:fdim}, $|\{i\in[n]\mid\widehat{g}(\{i\})\ne 0\}|\le \dim(\widehat{g})\le 3$.
If $|\{i\in[n]\mid\widehat{g}(\{i\})\ne 0\}|\le 2$,
it holds $\sum_{i\in[n]}\widehat{g}(\{i\})\le \sqrt{2} < 3/2$ from the Cauchy-Schwartz inequality.
If $|\{i\in[n]\mid\widehat{g}(\{i\})\ne 0\}|= 3$, $g$ depends only on three variables since $\dim(\widehat{g})\le 3$.
Hence, $g$ is essentially equivalent to $\mathrm{Maj}_3$.
\end{proof}
From the following lemma, only boolean functions with small Fourier dimension may outperform $\mathrm{Maj}_3$.

\begin{lemma}\label{lem:sum}
For any $g\colon\{0,1\}^n\to\{0,1\}$,
\begin{equation*}
\delta_{\mathrm{B}}(g)
\ge \left(\frac1{\dim(\widehat{g})}\right)^{\frac1{2(\dim(\widehat{g})-1)}}.
\end{equation*}
In particular, if $\dim(\widehat{g})\ge 5$, it holds $\delta_{\mathrm{B}}(g)>\sqrt{2/3}$.
\end{lemma}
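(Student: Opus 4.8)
The plan is to bound separately the two ingredients of $\delta_{\mathrm{B}}(g)$, namely the noise tolerance $\rho_{\mathrm{B}}(g)$ and the exponent $D_\rightarrow^\oplus(g^\oplus)$, using only Parseval's identity, the Cauchy--Schwarz inequality, and Lemma~\ref{lem:fdim}. The statement is to be read for $\dim(\widehat g)\ge 2$; when $\dim(\widehat g)\le 1$ the function $g$ is a constant or (up to sign) a single parity, hence $\rho_{\mathrm{B}}(g)=1$ and $\delta_{\mathrm{B}}(g)=1$. Also, if $\widehat g(\varnothing)\ne 0$ or $\rho_{\mathrm{B}}(g)\ge 1$ then $\delta_{\mathrm{B}}(g)=1$, which already exceeds $\dim(\widehat g)^{-1/(2(\dim(\widehat g)-1))}<1$; so from now on I would assume $\widehat g(\varnothing)=0$ and $\rho_{\mathrm{B}}(g)<1$, i.e.\ $s:=\sum_{i=1}^n\widehat g(\{i\})>1$ and $\rho_{\mathrm{B}}(g)=1/s$.

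First I would bound $s\le\sqrt{\dim(\widehat g)}$. Put $k:=|\{i\mid\widehat g(\{i\})\ne 0\}|$. By Cauchy--Schwarz and then Parseval, $s\le\sqrt{k}\,\bigl(\sum_i\widehat g(\{i\})^2\bigr)^{1/2}\le\sqrt{k}$. Moreover $k\le\dim(\widehat g)$, because the vectors $1_{\{i\}}$ for those $i$ with $\widehat g(\{i\})\ne 0$ belong to the generating set $\{1_S\mid S\in\mathrm{supp}(\widehat g)\}$ and, being distinct standard basis vectors, are linearly independent over $\mathbb F_2$. Hence $\rho_{\mathrm{B}}(g)=1/s\ge\dim(\widehat g)^{-1/2}$. (Note $s>1$ already forces $k\ge 2$, consistent with $\dim(\widehat g)\ge 2$.)

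For the exponent, Lemma~\ref{lem:fdim} gives $D_\rightarrow(g^\oplus)=\dim(\widehat g)$, and since $D_\rightarrow^\oplus(f)\ge D_\rightarrow(f)-1$ we get $D_\rightarrow^\oplus(g^\oplus)\ge\dim(\widehat g)-1\ge 1$. Putting the two bounds together, and using that for $0<\theta<1$ the map $u\mapsto\theta^{u}$ is decreasing while for $c>0$ the map $t\mapsto t^{c}$ is increasing,
\[
\delta_{\mathrm{B}}(g)=\rho_{\mathrm{B}}(g)^{1/D_\rightarrow^\oplus(g^\oplus)}\ \ge\ \rho_{\mathrm{B}}(g)^{1/(\dim(\widehat g)-1)}\ \ge\ \bigl(\dim(\widehat g)^{-1/2}\bigr)^{1/(\dim(\widehat g)-1)}=\dim(\widehat g)^{-\frac1{2(\dim(\widehat g)-1)}},
\]
which is the claimed inequality.

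For the ``in particular'' clause I would show that $d\mapsto d^{-1/(2(d-1))}$ is increasing for $d\ge 3$; taking logarithms and differentiating, this reduces to the elementary estimate $\ln d-1+1/d>0$ (valid already for $d\ge 3$ since $\ln 3>1$). Therefore $\dim(\widehat g)\ge 5$ gives $\delta_{\mathrm{B}}(g)\ge 5^{-1/8}>0.817>\sqrt{2/3}$, as required. I expect no real obstacle; the only step that needs attention is the chain of exponent inequalities in the display, where one must use $\rho_{\mathrm{B}}(g)<1$ together with $D_\rightarrow^\oplus(g^\oplus)\ge\dim(\widehat g)-1$ in the correct order.
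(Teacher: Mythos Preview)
Your proof is correct and follows essentially the same approach as the paper: bound $\sum_i\widehat g(\{i\})$ by $\sqrt{\dim(\widehat g)}$ via Cauchy--Schwarz and Parseval (using $|\{i:\widehat g(\{i\})\ne0\}|\le\dim(\widehat g)$), then combine with $D_\rightarrow^\oplus(g^\oplus)\ge\dim(\widehat g)-1$ from Lemma~\ref{lem:fdim}. You are simply more explicit than the paper about the edge cases $\dim(\widehat g)\le1$, the order of the exponent inequalities, and the monotonicity check for the ``in particular'' clause.
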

\begin{proof}
One obtains
\begin{align*}
\dim(\widehat{g})&\ge |\{i\in[n] \mid \widehat{g}(\{i\})\ne 0\}|\\
&\ge 
\frac{\left(\sum_{i\in[n]}\widehat{g}(\{i\})\right)^2}{\sum_{i\in[n]}\widehat{g}(\{i\})^2}
\ge 
\left(\sum_{i\in[n]}\widehat{g}(\{i\})\right)^2.
\end{align*}
In the above, the first inequality is trivial. The second inequality is the Cauchy-Schwartz inequality.
The third inequality holds since sum of squares of all of the Fourier coefficients is 1.
Hence, $\rho_{\mathrm{B}}(g) \ge \dim(\widehat{g})^{-1/2}$.
From Lemma~\ref{lem:fdim}, we obtain this theorem.
\end{proof}

Lemmas~\ref{lem:exh} and~\ref{lem:sum} give the complete proof of Theorem~\ref{thm:main1}.
\begin{proof}[Proof of Theorem~\ref{thm:main1}]
From Lemma~\ref{lem:sum},
we only have to show that 
if $|\{i\in[n] \mid \widehat{g}(\{i\})\ne 0\}|\le \dim(\widehat{g}) \le 4$,
$\delta_{\mathrm{B}}(g) \le \sqrt{2/3}$ only for $g$ essentially equivalent to $\mathrm{Maj}_3$.
Assume $|\{i\in[n] \mid \widehat{g}(\{i\})\ne 0\}|= 4$.
Then, the boolean function $g$ depends only on 4 input variables since $\dim(\widehat{g})\le 4$.
From Lemma~\ref{lem:exh}, there is no function on 4 variables satisfying $\delta_{\mathrm{B}}(g)\le \sqrt{2/3}$ except for functions essentially equivalent to $\mathrm{Maj}_3$.
Next, we assume $|\{i\in[n] \mid \widehat{g}(\{i\})\ne 0\}|= 3$.
In this case, $\sum_{i\in[n]}\widehat{g}(\{i\})\le \sqrt{3}$.
Since $(1/\sqrt{3})^{1/3}> \sqrt{2/3}$, we can assume $D_\rightarrow^\oplus(g^\oplus)\le 2$.
Then, the boolean function $g$ depends only on 3 input variables since $\dim(\widehat{g})\le D_\rightarrow^\oplus(g^\oplus)+1 \le 3$.
From Lemma~\ref{lem:exh}, there is no function on 3 variables satisfying $\delta_{\mathrm{B}}(g)\le\sqrt{2/3}$ except for $\mathrm{Maj}_3$.
Next, we assume $|\{i\in[n] \mid \widehat{g}(\{i\})\ne 0\}|\le 2$.
In this case, $\sum_{i\in[n]}\widehat{g}(\{i\})\le \sqrt{2}$.
Since $(1/\sqrt{2})^{1/2}> \sqrt{2/3}$, we can assume $D_\rightarrow^\oplus(g^\oplus)\le 1$.
From Lemma~\ref{lem:nlbc1}, it holds $\delta_{\mathrm{B}}(g)= 1$.
We conclude that there is no function satisfying $\delta_{\mathrm{B}}(g)\le \sqrt{2/3}$ except for functions essentially equivalent to $\mathrm{Maj}_3$.
\end{proof}

\section{Conclusion}
In this paper, we show that the 3-input majority function is the unique optimal function for Brassard et al.'s bias amplification on some conditions.
This paper also develops mathematical framework using Fourier analysis for problems on XOR protocols with nonlocal boxes.
On the other hand, in this paper, functions $g^\oplus$ for the bias amplification are restricted to be XOR function although it seems to be a natural restriction.
Furthermore, protocols for computing functions $g^\oplus$, in this paper, are restricted to be the particular adaptive PR-correct protocol, which is better than arbitrary non-adaptive PR-correct protocol.
General adaptive protocols may allow more reliable computation than these protocols~\cite{PhysRevLett.102.160403}.
Similar adaptive protocol in Section~\ref{subsec:adaptive} gives the bias $\delta^{D_{\leftrightarrow}^\oplus(f)}$ where $D_{\leftrightarrow}^\oplus(f)$ denotes the two-way communication complexity for computing $a$ and $b$ satisfying $a\oplus b=f(x,y)$.
Hence, the result of this paper does not show the limitation of the idea of the bias amplification, but show only the limitation of the idea of the bias amplification by XOR function computed by the particular adaptive PR-correct protocol.
The bias amplification by general adaptive computation of non-XOR function would be an interesting direction of research.

\begin{acknowledgments}
This work was supported by MEXT KAKENHI Grant Number 24106008.
\end{acknowledgments}

\end{document}